\begin{document}

\newtheorem{remark}{Remark}
\newtheorem{proposition}{\textit{Proposition}}
\newtheorem{lemma}{\textit{Lemma}}
\newtheorem{theorem}{\textit{Theorem}}

\newtheorem{definition}{\textit{Definition}}
\newtheorem{problem}{\textit{Problem}}

\title{Access Policy Design for Cognitive Secondary Users under a Primary Type-I HARQ Process }

\author{Roghayeh~Joda,~\IEEEmembership{~Member,~IEEE}
         and Michele~Zorzi,~\IEEEmembership{Fellow,~IEEE}
\thanks{%
R. Joda is with the Department of Communication Technology, Iran
Telecommunication Research Center, Tehran, Iran (e-mail:
{r.joda}@itrc.ac.ir)}
\thanks{%
M. Zorzi is with the Department of Information Engineering,
University of Padova, Italy (e-mail: {zorzi}@dei.unipd.it).}
\thanks{%
This work is a generalization of \cite{R22} presented at the IEEE
International Conference on Communications-Cooperative and Cognitive
Mobile Networks Workshop in June 2014.}}

\maketitle \thispagestyle{empty}
\IEEEpeerreviewmaketitle
\begin{abstract}
In this paper, an underlay cognitive radio network that consists of
an arbitrary number of secondary users (SU) is considered, in which
the primary user (PU) employs Type-I Hybrid Automatic Repeat Request
(HARQ). Exploiting the redundancy in PU retransmissions, each SU
receiver applies forward interference cancelation to remove a
successfully decoded PU message in the subsequent PU
retransmissions. The knowledge of the PU message state at the SU
receivers and the ACK/NACK message from the PU receiver are sent
back to the transmitters. With this approach and using a Constrained
Markov Decision Process (CMDP) model and Constrained Multi-agent MDP
(CMMDP), centralized and decentralized optimum access policies for
 SUs are proposed to maximize their average sum throughput under
a PU throughput constraint. In the decentralized case, the channel
access decision of each SU is unknown to the other SU. Numerical
results demonstrate the benefits of the proposed policies in terms
of sum throughput of SUs. The results also reveal that the
centralized access policy design outperforms the decentralized
design especially when the PU can tolerate a low average long term
throughput. Finally, the difficulties in decentralized access policy
design with partial state information are discussed.
\end{abstract}

\section{Introduction}
The advent of new technologies and services in wireless
communication has increased the demand for spectrum resources so
that the traditional fixed frequency allocation will not be able to
meet these bandwidth requirements. However, most of the spectrum
frequencies assigned to licensed users are under-utilized. Thus,
cognitive radio is proposed to improve the spectral efficiency of
wireless networks \cite{R1}. Cognitive radio enables licensed
primary users (PUs) and unlicensed secondary users (SUs) to coexist
and transmit in the same frequency band \cite{R2}, \cite{R3}. For a
literature review on spectrum sharing and cognitive radio, the
reader is referred to \cite{R4}\nocite{R5}-\cite{R6}. In the
underlay cognitive radio approach, the smart SUs are allowed to
simultaneously transmit in the licensed frequency band allotted to
the PU. The PU is oblivious to the presence of the SUs while the SU
needs to control the interference it causes at the PU receiver.

HARQ, a link layer mechanism, is a combination of high-rate forward
error-correcting coding (FEC) and ARQ error-control, and is employed
in current technologies, including for example HSDPA and LTE. CRNs
with an HARQ scheme implemented by the PU are addressed in
\cite{R7}\nocite{R9}\nocite{R11}\nocite{R7}\nocite{R8}\nocite{R151}\nocite{R12}-\nocite{R181}\cite{R27}.
\cite{R7}, \cite{R9} and \cite{R11} show how to exploit the Type-I
HARQ retransmissions implemented by the PU. \cite{R7} considers a
cognitive radio network composed of one PU and one SU, and does not
utilize interference cancelation (IC) at the SU receiver. \cite{R9}
employs Type-I HARQ with an arbitrary number of retransmissions and
applies backward and forward IC after decoding the PU message at the
SU receiver. The network considered in \cite{R11} is similar to
\cite{R7}, where the SU is also allowed to selectively retransmit
its own previous corrupted message and apply a chain decoding
protocol to derive the SU access policy. \cite{R8} applies Type-II
Hybrid ARQ with at most one retransmission, where the SU receiver
tries to decode the PU message in the first time slot and, if
successful, it removes this PU message in the second time slot to
improve the SU throughput. The extension of the work in \cite{R8} to
IR-HARQ with multiple rounds is addressed in \cite{R151}, where
several schemes are proposed. \cite{R12} proposes SU transmission
schemes when the SU is able to infrequently probe the channel using
the PU Type-II HARQ feedback with Chase combining (CC-HARQ).
Exploiting primary Type-II HARQ in CRN has also been studied in
\cite{R181} and \cite{R27}. Note that deriving the benefit from PU
Type-I HARQ for designing an optimum access policy has been only
addressed for CRNs with one SU in the literature, with the exception
of our work in \cite{R22}. We have to notice that increasing the
number of SUs and allowing them to access the channel cause more
interference at the PU receiver and therefore decrease the PU
throughput. In fact it is necessary to control the access of the SUs
to the channel to constrain the PU throughput degradation.

In this paper, an optimum access policy for $N$ SUs is designed,
which exploits the redundancy introduced by the Type-I HARQ protocol
in transmitting copies of the same PU message and interference
cancelation at the SU receivers. The aim is to maximize the average
long term sum throughput of SUs under a constraint on the average
long term PU throughput degradation. We assume that the number of
transmissions is limited to at most $T$ and all SUs have a new
packet to transmit in each time slot. Two design scenarios are
considered: in the first one, SUs make a channel access decision
jointly, whereas in the second scenario, each SU makes an
independent decision and does not know whether or not the other
secondary users access the channel. We call them respectively as
centralized and decentralized scenarios. Noting the PU message
knowledge state at each of the SU receivers and also the ARQ
retransmission time, the $PU-SU_1-...-SU_N$ network is modeled using
MDP and MMDP models \cite{R23}, respectively in centralized and
decentralized scenarios. Due to the constraint on the average long
term PU throughput, we then have a constrained MDP (CMDP) and
Constrained MMDP (CMMDP).

In the centralized case, the access policy in one state shows the
joint probability of accessing and/or not accessing the channel by
the SUs. Using \cite{R10} and \cite{R13}, it follows that the
optimal policy may be obtained from the solution of a corresponding
LP problem. In the decentralized scenario, there is an access policy
for each SU describing the probability of accessing the channel by
that SU. It is noteworthy that we are interested in random access
policies instead of only deterministic access policies. Hence, the
optimum polices in the centralized case can not be directly applied
to a decentralized scenario. To propose local optimum access
policies for the CMMDP model, we employ Nash Equilibrium.

The simulation results demonstrate that due to the use of forward IC
(FIC), a cognitive radio network converges to the upper bound faster
as the number of SUs increases for large enough SNR of the channels
from the PU transmitter to SU receivers. The results also reveal
that our proposed centralized access policy design significantly
outperforms the decentralized one when the average PU throughput
constraint is low.

The paper is organized as follows. Following the system model in
Section \ref{system model}, the rates and the corresponding outage
probabilities are computed in Section \ref{Rate}. Optimal access
policies for $N$ SUs in centralized and decentralized scenarios are
proposed respectively in Sections \ref{SIII} and \ref{S_DeC}. The
numerical results are presented in Section \ref{SIV} and an
extension to the paper is discussed in Section \ref{SV}. Finally,
the paper is concluded in Section \ref{SVI}.

\section{System Model}\label{system model}
In the system we consider, there exist one primary and N secondary
transmitters denoted by $PU_{tx}$, $SU_{tx1}$,...,$SU_{txN}$,
respectively. These transmitters transmit their messages with
constant power over block fading channels. In each time slot (one
block of the channel), the channels are considered to be constant.
The instantaneous signal to noise ratios of the channels
$PU_{tx}\rightarrow PU_{rx}$,  $PU_{tx}\rightarrow SU_{rxn}$,
$SU_{txn}\rightarrow SU_{rxm}$, $SU_{txn}\rightarrow PU_{rx}$,
$n,m\in\{1,...,N\}$ are denoted by $\gamma_{pp}$, $\gamma_{ps_n}$,
$\gamma_{s_ns_m}$ and $\gamma_{s_np}$, respectively. As an example,
the system model with the mentioned channel SNRs for $N=2$ is
depicted in Fig. \ref{Fsys}.
\begin{figure}[t!]
\centering
\includegraphics[width=2.2 in]{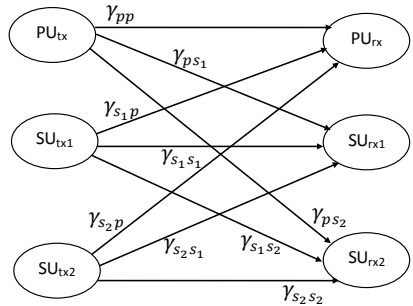}\caption{CRN Model with two SUs}
\label{Fsys}
\end{figure}

We assume that no Channel State Information (CSI) is available at
the transmitters except the ACK/NACK message and the PU message
knowledge state. Thus, transmissions are under outage, when the
selected rates are greater than the current channel capacity.

PU is unaware of the presence of the SUs and employs Type-I HARQ
with at most $T$ transmissions of the same PU message. We assume
that the ARQ feedback is received by the PU transmitter at the end
of a time-slot and a retransmission can be performed in the next
time-slot. Retransmission of the PU message is performed if it is
not successfully decoded at the PU receiver until the PU message is
correctly decoded or the maximum number of transmissions allowed,
$T$, is reached \footnote{A different type of HARQ, namely Type-II,
successively transmits incremental redundancy for the same packet
until success or until the maximum number of transmissions is
reached. While HARQ Type-II is out of the scope of the present
paper, we refer the interested reader to \cite{R27} for an initial
study and some preliminary results.}. Fig. \ref{ARQ} shows the model
of the PU Type-I HARQ, where $R_P$ is the PU transmission rate and
$C_t$ is the capacity of the $PU_{tx}$ to $PU_{rx}$ channel in ARQ
time slot $t$ when SU transmissions are considered as background
noise at $PU_{rx}$. In each time-slot, each SU, if it accesses the
channel, transmits its own message, otherwise it stays idle and does
not transmit. This decision is based on the access policy described
later. The activity of the SUs affects the outage performance of the
PU, by creating interference at the PU receiver. The objective is to
design access policies for SUs to maximize the average sum
throughput of the SUs under a constraint on the PU average
throughput degradation.

\begin{figure}[t!]
\centering
\includegraphics[width=5 in]{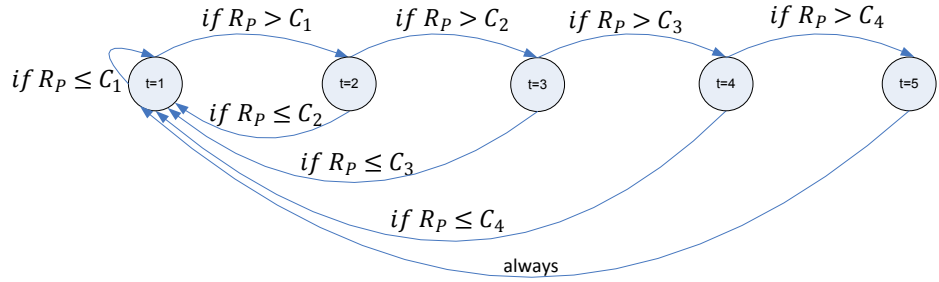}\caption{ARQ Type-I HARQ Model for $T=5$.}
\label{ARQ}
\end{figure}

We consider centralized and decentralized scenarios. In the
centralized scenario, there exists a central unit which receives the
PU message knowledge states of the SUs as well as the ACK/NACK
message from the PU receiver. This unit then computes the secondary
access actions and provides them to the SUs. In the decentralized
scenario, there exists no central unit. The PU message knowledge
state at each SU receiver is fed back to all the SU transmitters,
but each SU transmitter makes its own channel access decision
independently, based on this information. Thus, in the decentralized
design each SU is not aware of the access decisions of the other SUs
in the same slot.

If $SU_{rxn}$, $n\in\{1,...,N\}$, succeeds in decoding the PU
message, it can cancel it from the received signal in future
retransmissions. We refer to this as FIC \cite{R9}. We call the PU
message knowledge state as $\Phi=\bigl(\phi(1),...,\phi(N)\bigr)$,
which belongs to the set of $2^N$ possible combinations of PU
message knowledge states of all secondary users, where $\phi(n)$ is
the PU message knowledge state of the $SU_n$ receiver. For example,
if $\Phi=(K,K)$ for $N=2$, then $SU_{rx1}$ and $SU_{rx2}$ both know
the PU message and thus can perform FIC.

In the centralized scenario, there are $2^N$ possible channel access
combinations for the $N$ SUs, collected in the set
$A=\{0,1,...,2^N-1\}$. Each action, denoted by $a$, can be
represented as an $N$-dimensional vector
$\Psi(a)=\bigl(\varphi(a,1),...,\varphi(a,N)\bigr)$ which is equal
to the binary expansion of $a$, $0\leq a\leq2^N-1$ and therefore,
$\varphi(a,n)\in\{0,1\}$. Equivalently, we have
\begin{align}\label{DtoB}
\Psi(a)=Dec2Bin_N(a),
\end{align}
where the function $Dec2Bin_N$ is the $N$-dimensional decimal to
binary conversion. For access action $a$, $\varphi(a,n)=1$ means
that $SU_n$ is allowed to access the channel. If
$\Psi(a)=\mathbf{U}_n$, only $SU_n$ accesses the channel, where
$\mathbf{U}_n$ is defined as follows:
\begin{definition}
$\mathbf{U}_n$ is an $N$-dimensional vector with $\mathbf{U}_n(n)=1$
and $\mathbf{U}_n(m)=0$ for $m\neq n$.
\end{definition}

On the contrary, in the decentralized case, the access action is
$a_n\in A_n=\{0,1\}$ for secondary user $n$, where $a_n=1$ means
that this user is allowed to transmit.

\section{Rates and Outage Probabilities }\label{Rate}
First we consider the centralized scenario, where we have a joint
access action $a\in A=\{0,1,...,2^N-1\}$ and then we address the
decentralized scenario with independent $N$  access actions $a_n\in
A_n=\{0,1\}$, $n\in{1,...,N}$.

\subsection{Centralized Scenario}\label{Rate-Cent}
 The PU transmission rate, $R_P$, is considered fixed. However, based on the PU
message knowledge state $\Phi$ and the access action $a$, the rate
of each secondary user $n$ can be adapted and is denoted by
$R_{s_n,a,\Phi}$, $a\in A=\{1,...,2^N-1\}$. (All rates for access
action $a=0$ are zero.)

The outage probability of the channel $PU_{tx}\rightarrow PU_{rx}$
for SU access action $a$ is denoted by $\rho_{p,a}$. Noting that the
$SU_n$ transmissions $\forall n\in\{1,...,N\}$ are considered as
background noise at $PU_{rx}$, we have
\begin{align}\label{E34}
\rho_{p,a}&=1-Pr\left(R_p\leq
C(\frac{\gamma_{pp}}{1+\sum_{n=1}^{N}\varphi(a,n)\gamma_{s_np}})\right)\;\;\;\;a\in
A=\{0,1,...,2^N-1\},
\end{align}
where $C(x)=\log_2(1+x)$. Obviously, $C_t$ in Fig. \ref{ARQ} is
equal to
$C(\frac{\gamma_{pp}}{1+\sum_{n=1}^{N}\varphi(a,n)\gamma_{s_np}})$
if in ARQ time $t$, action $a$ is selected.

The SNR region
${\Gamma}_{s_n,a,\Phi}(R_{s_1,a,\Phi},...,R_{s_N,a,\Phi})$,
$n\in\{1,...,N\}$, where $\phi(n)=K$, is the set of all $N-$tuples
of $SNRs$ $(\gamma_{s_1s_n},...,\gamma_{s_Ns_n})$, for which the
$SU_n$ message transmitted at rate $R_{s_n,a,{\Phi}}$ is
successfully decoded at $SU_{rxn}$ regardless of the decoding of
other SUs messages transmitted at rates $R_{s_m,a,\Phi}$, $\forall
m\neq n$. The SNR region
$\dot{\Gamma}_{s_n,a,\Phi}(R_p,R_{s_1,a,\Phi},...,R_{s_N,a,\Phi})$
is similarly defined for $\phi(n)=U$ and contains all SNR vectors
such that the $SU_n$ message transmitted at rate $R_{s_n,a,{\Phi}}$
is successfully decoded at $SU_{rxn}$ irrespective of the decoding
of other SUs and PU messages transmitted at rates $R_{s_m,a,\Phi}$
and $R_p$ respectively\footnote{Note that unlike in traditional
systems, where the decodability of a signal depends only on its own
rate, in the presence of Interference Cancelation it also depends on
the interferers' rates (see the examples in \eqref{E77} and
\eqref{E37}).}. Thus, the outage probability of the channel
$SU_{txn}\rightarrow SU_{rxn}$, $n\in\{1,...,N\}$ denoted by
$\rho_{s_n,a,\Phi}$ is computed as
\begin{align}\label{E31}
\rho_{s_n,a,\Phi=(\phi(1),...,\phi(n)=K,...,\phi(N))}=Pr\left((\gamma_{s_1s_n},...,\gamma_{s_Ns_n})\notin
{\Gamma}_{s_n,a,\Phi}(R_{s_1,a,\Phi},...,R_{s_N,a,\Phi})\right)
\end{align}
and
\begin{align}\label{E33}
\rho_{s_n,a,\Phi=(\phi(1),...,\phi(n)=U,...,\phi(N))}=Pr\left((\gamma_{psn},\gamma_{s_1s_n},...,\gamma_{s_Ns_n})\notin
{\dot{\Gamma}}_{s_n,a,\Phi}(R_p,R_{s_1,a,\Phi},...,R_{s_N,a,\Phi})\right).
\end{align}
As an example of how the SNR regions can be determined, we have:
\begin{align}\label{E77}
&{\dot{\Gamma}}_{s_1,1,\Phi}(R_p,R_{s_1,1,\Phi})\overset{\Delta}=\biggl\{(\gamma_{s_1s_1},\gamma_{ps_1}):
R_{s_1,1,\Phi}\leq C(\gamma_{s_1s_1}),\biggr.\nonumber\\
&\biggl.\;R_p\leq C(\gamma_{ps_1}), R_{s_1,1,\Phi}+R_p\leq
C(\gamma_{s_1s_1}+\gamma_{ps_1})\biggr\}\bigcup \nonumber\\
& \biggr\{(\gamma_{s_1s_1},\gamma_{ps_1}): R_p>C(\gamma_{ps_1}),
R_{s_1,1,\Phi}\leq
C(\frac{\gamma_{s_1s_1}}{1+\gamma_{ps_1}})\biggr\},\;\text{where
}\phi(1)=U,
\end{align}
\begin{align}\label{E37}
&{\Gamma}_{s_1,3,\Phi}(R_{s_1,3,\Phi},R_{s_2,3,\Phi})\overset{\Delta}=\biggl\{(\gamma_{s_1s_1},\gamma_{s_2s_1}):
R_{s_1,3,\Phi}\leq C(\gamma_{s_1s_1}),\biggr.\nonumber\\
&\biggl.\;R_{s_2,3,\Phi}\leq C(\gamma_{s_2s_1}),
R_{s_1,3,\Phi}+R_{s_2,3,\Phi}\leq
C(\gamma_{s_1s_1}+\gamma_{s_2s_1})\biggr\}\bigcup\nonumber\\
& \biggr\{(\gamma_{s_1s_1},\gamma_{s_2s_1}):
R_{s_2,3,\Phi}>C(\gamma_{s_2s_1}), R_{s_1,3,\Phi}\leq
C(\frac{\gamma_{s_1s_1}}{1+\gamma_{s_2s_1}})\biggr\},\;\text{where
}\phi(1)=K.
\end{align}
As observed, ${\dot{\Gamma}}_{s_1,1,\Phi}(R_p,R_{s_1,1,\Phi})$
depends on $R_p$ and $R_{s_1,1,\Phi}$, when $\phi(1)=U$. This is
because only $SU_1$ is allowed to access the channel when $a=1$ and
the $PU$ message is unknown at $SU_{rx1}$. It is also seen that
${\Gamma}_{s_1,3,\Phi}(R_{s_1,3,\Phi},R_{s_2,3,\Phi})$ depends on
$R_{s_1,3,\Phi}$ and $R_{s_2,3,\Phi}$ when $\phi(1)=K$. The reason
is that only $SU_1$ and $SU_2$ access the channel when $a=3$ and the
$PU$ message can be removed at the $SU_1$ receiver. All other SNR
regions can be similarly computed (full details for $N=2$ can be
found in \cite{R28}).


\subsection{Decentralized Scenario}\label{Rate-Dec}
In the decentralized case, each SU does not coordinate its access
action with the other SUs, and therefore there exist $N$ independent
binary access actions $a_n\in A_n=\{0,1\}$ $\forall
n\in\{1,...,N\}$. The access action $a$ in the decentralized case is
the combination of $N$ binary decisions (actions) $a_1,...,a_N$ and
may be derived as follows:
\begin{align}\label{E76}
a=Bin2Dec(a_N,a_{N-1},...,a_1)=\sum_{n=1}^N{ a_n 2^{n-1}},
\end{align}
where the function $Bin2Dec$ is binary to decimal conversion. Thus,
the rates and the outage probabilities at access action $a$ and PU
message knowledge state $\Phi$ defined in Section \ref{Rate-Cent}
can also be applied in the decentralized scenario.

\section{Centralized Optimal Access Policies for the SUs}\label{SIII}
The state of the $PU-SU_1-...-SU_N$ system may be modeled by a
Markov Process $s=(t,\Phi)$, where $t\in\{1,2,...,T\}$ is the
primary ARQ state and $\Phi$, the PU message knowledge state,
belongs to the set of $2^N$ possible combinations of PU message
knowledge states. The set of all states is indicated by
$\mathcal{S}$, and the number of states is equal to $2^N *(T-1)+1$.

The policy $\mu$ maps the state of the network $s$ to the
probability that the secondary users take access action
$a\in\{0,1,...,2^N-1\}$. The probability that action $a$ is selected
in state $s$ is denoted by $\mu(a,s)$. For example, with probability
$\mu(1,s)$, only $SU_{tx1}$ transmits and with probability
$\mu(0,s)=1-\sum_{i=1}^{2^N-1} \mu(i,s)$, they are all idle.

If access action $a\in\{1,,...,2^N-1\}$ is selected, the expected
throughput of $SU_n$, $n\in{1,...,N}$ in state $s=(t,\Phi)$ is
computed as
\begin{align}\label{E30}
{T}_{s_n,a,\Phi}=R_{s_n,a,\Phi}(1-\rho_{s_n,a,\Phi})
\end{align}
Since the model considered here is a stationary Markov chain, the
average long term SU sum throughput can be obtained as
\begin{align}\label{E2}
&\bar{T}_{su,c}(\mu)=\text{E}_{a,s=(t,\Phi)}\left[\sum_{n=1}^{N}{T}_{s_n,a,\Phi}\right]=\text{E}_{s=(t,\Phi)}\biggl[\sum_{a=1}^{2^N-1}\sum_{n=1}^{N}\mu(a,s)R_{s_n,a,\Phi}(1-\rho_{s_n,a,\Phi})\biggr],
\end{align}
where $\text{E}_{a,s}$ denotes the expectation with respect to $a$
and $s$. The outage probabilities $\rho_{s_n,a,\Phi}$ are given in
\eqref{E31} and \eqref{E33}.

The aim is to maximize the average long term sum throughput of the
SUs under the long term average PU throughput constraint, where the
average long term PU throughput is given by
$\bar{T}_{pu}=R_p\left(1-\sum_{a=0}^{2^N-1}\text{E}_{s=(t,\Phi)}\left[\mu(a,s)\right]\rho_{p,a}\right)$.
Using $\mu(0,s)=1-\sum_{a=1}^{2^N-1}\mu(a,s)$, the average long term
PU throughput $\bar{T}_{pu}$ is rewritten as follows:
\begin{align}\label{E1}
\bar{T}_{pu}&=R_p\left(1-\sum_{a=1}^{2^N-1}\text{E}_{s=(t,\Phi)}\left[\mu(a,s)\right]\rho_{p,a}\right)-R_p\left(\rho_{p,0}-\sum_{a=1}^{2^N-1}\text{E}_{s=(t,\Phi)}\left[\mu(a,s)\right]\rho_{p,0}\right)\nonumber
\\
&=T_{pu}^{I}-R_p\left(\sum_{a=1}^{2^N-1}\text{E}_{s=(t,\Phi)}\left[\mu(a,s)\right](\rho_{p,a}-\rho_{p,0})\right)\nonumber\\
&=T_{pu}^{I}-R_p\left(\text{E}_{a,s=(t,\Phi)}\left[{\rho_{p,a}-\rho_{p,0}}\right]\right),
\end{align}
where $T_{pu}^{I}=R_p(1-\rho_{p,0})$; and $\rho_{p,a}$,
$a\in\{0,...,2^N-1\}$ are given in \eqref{E34}.

Thus, if we request that $\bar{T}_{pu}\geq
T_{pu}^{I}(1-\epsilon_{PU})$, the PU throughput degradation
constraint is computed as follows
\begin{align}
T_{pu}^{I}-\bar{T}_{pu}=R_p\text{E}_{a,s=(t,\Phi)}&\left[{\rho_{p,a}-\rho_{p,0}}\right]\leq
R_p(1-\rho_{p,0})\epsilon_{PU}.\nonumber
\end{align}

Now we can formalize the optimization problem as follows:
\begin{problem}\label{P1}
\begin{align}\label{E22}
&\underset{\mu(a,s)}{\operatorname{maximize}}{\;\bar{T}_{su,c}(\mu)=\text{E}_{a,s=(t,\Phi)}\left[\sum_{n=1}^{N}{T}_{s_n,a,\Phi}\right]}\text{\;\;s.t.}
\end{align}
\begin{align}\label{E23}
&\text{E}_{a,s=(t,\Phi)}\left[{\rho_{p,a}-\rho_{p,0}}\right]\leq
(1-\rho_{p,0})\epsilon_{PU}\triangleq\epsilon_{\omega},
\end{align}
where $\mu(a,s)$ is the probability that access action $a$ is
selected in state $s$.
\end{problem}
The constraint \eqref{E23} is referred to as the normalized PU
throughput degradation constraint.

To give a solution to Problem \ref{P1}, we provide the following
definition, which identifies the boundary between low and high
access rate regimes.
\begin{definition} \label{D1}
Let $\acute{\mu}_{init}=\{\mu_{1,init},...,\mu_{2^{N}-1,init}\}$ be
defined as follows:
\begin{align}\label{E7}
\acute{\mu}_{init}=
\begin{cases} \mathbf{U}_m& \forall
s\in\mathcal{S_K}=\left\{(t,(K,...,K)):t\in\{1,2,...,T\}\right\}\\
0& \forall s\notin\mathcal{S_K},
\end{cases}
\end{align}
where
\begin{align}\label{E35}
m=\underset{a\in\{1,...,2^N-1\}}{\arg\max}v_a
\end{align}
\begin{align}\label{E36}
v_a=Dec2Bin_N(a).({T}_{s_1,a,(K,...,K)},...,{T}_{s_N,a,(K,...,K)})\operatorname{min}(\frac{\epsilon_{\omega}}{\rho_{p,a}-\rho_{p,0}},1)
\end{align}
and $A.B$ is the inner product of two vectors A and B; and
${T}_{s_n,a,(K,...,K)}$ is given in \eqref{E30}. Thus, according to
\eqref{E35},  action $m\in\{1,...,2^N-1\}$ is selected if
$s\in\mathcal{S_K}$, otherwise action $0$ is selected. Note that
${\mu}_{init}=\{\mu_{0,init}\}\bigcup\acute{\mu}_{init}$ is a random
access policy, where
$\mu_{0,init}=1-\sum_{a=1}^{2^N-1}{\mu_{a,init}}$.

For access policy ${\mu}_{init}$, we compute the normalized PU
throughput degradation constraint in \eqref{E23} and refer to it as
$\omega_{init}$. Hence, replacing \eqref{E7} in \eqref{E23} and then
computing the expectation with respect to $a$ and $s$,
$\omega_{init}$ can be obtained as follows:
\begin{align}\label{E8}
&\omega_{init}=({\rho_{p,m}-\rho_{p,0}})\sum_{t=1}^{T}{\pi(t,(K,...,K))},
\end{align}
where $m$ is given in \eqref{E35} and $\pi(t,(K,...,K))$ is the
steady-state probability of being in state $s=(t,(K,...,K))$.
\end{definition}

 In the sequel, we derive an upper bound to the
average long term sum throughput of SUs, and characterize the low SU
access rate regime $\epsilon_{\omega}\leq\omega_{init}$ and high SU
access rate regime $\epsilon_{\omega}>\omega_{init}$.

\subsection{Upper Bound to the Average Long Term SU Sum
Throughput in Centralized Access Policy Design}\label{UPPER} An
upper bound to the average long term SU sum throughput is achieved
when the receivers are assumed to know the PU message, so that they
can always cancel the PU interference. Since each SU always knows
the PU message, as in \cite{R9} there exists an optimal access
policy which is independent of the ARQ state, and therefore is the
same in each slot. We refer to this policy as
$\mu=\{\mu_0,\mu_1,...,\mu_{2{N}-1}\}$. Thus, noting that
$\sum_{a=1}^{2^N-1}\mu_a\leq1$ and $0\leq\mu_{a}$, Problem \ref{P1}
may be rewritten as follows:
\begin{problem}\label{P2}
\begin{align}\label{E38}
&\underset{\mu_1,...,\mu_{2^{N}-1}}{\operatorname{max}}\bar{T}_{su,c}(\mu)=\sum_{a=1}^{2^N-1}\mu_a{Dec2Bin_N(a).({T}_{s_1,a,(K,...,K)},...,{T}_{s_N,a,(K,...,K)})},\;\;s.t.
\end{align}
\begin{align}\label{E39}
         &\sum_{a=1}^{2^N-1}\mu_a({\rho_{p,a}-\rho_{p,0}})\leq\epsilon_{\omega},
         \;\;\sum_{1}^{2^N-1}\mu_a\leq1,
         \end{align}
\end{problem}
where $0\leq\mu_a$. Proposition \ref{PR2} below provides a solution
to Problem \ref{P2}.
\begin{proposition}\label{PR2}
An access policy to achieve the upper bound is given by
$\mu^{u}=\{\mu^u_0,\mu^u_1,...,\mu^u_{2^N-1}\}=\{\mu^{u}_0\}\bigcup\acute{\mu}^{u}$,
where\footnote{Please note that the ``min'' operation in \eqref{E91}
and \eqref{E28} (which was erroneously not included in \cite{R22})
is needed to ensure that $\mu^u$ is a valid probability distribution
when $\frac{\epsilon_w}{\rho_{p,i}-\rho_{p,0}}>1$.}\vspace{-4pt}
\begin{align}\label{E91}
&\acute{\mu}^{u}=\operatorname{min}(\frac{\epsilon_{\omega}}{\rho_{p,m}-\rho_{p,0}},1)\mathbf{U}_m
\end{align}
and the upper bound to the average long term SU sum throughput is
obtained as
\begin{align}\label{E28}
&\bar{T}_{su,c}^{u}=\operatorname{min}(\frac{\epsilon_{\omega}}{\rho_{p,m}-\rho_{p,0}},1){Dec2Bin_N(m).({T}_{s_1,m,(K,...,K)},...,{T}_{s_N,m,(K,...,K)})}
\end{align}
where $m$ is defined in \eqref{E35} and the other parameters are
given in Sections \ref{system model} and \ref{Rate}.
\end{proposition}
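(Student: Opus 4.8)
The plan is to treat Problem~\ref{P2} as a linear program and show that its optimum is attained at a ``single-action'' extreme point of the feasible set, which turns out to be exactly $\mu^u$. Write $p_a\triangleq\rho_{p,a}-\rho_{p,0}$ and $c_a\triangleq Dec2Bin_N(a)\cdot(T_{s_1,a,(K,\dots,K)},\dots,T_{s_N,a,(K,\dots,K)})\ge 0$, so that Problem~\ref{P2} reads: maximize $\sum_{a=1}^{2^N-1}\mu_a c_a$ over the polytope $\mathcal{P}=\{\mu\ge 0:\ \sum_a\mu_a\le 1,\ \sum_a\mu_a p_a\le\epsilon_\omega\}$. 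I would first note that $p_a>0$ for each $a\neq 0$ whenever at least one of the active secondary users interferes at $PU_{rx}$ (any fully degenerate $p_a=0$ being trivial and treated separately), and that $\mathcal P$ is a nonempty ($0\in\mathcal P$) compact polytope, so a linear objective attains its maximum at an extreme point of $\mathcal P$.

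The next step is to enumerate the extreme points. A standard basic-solution argument --- $2^N-1$ variables and only two functional constraints besides nonnegativity --- shows that at most two of the $\mu_a$ can be positive at an extreme point. The possibilities are: (i) $\mu=0$, value $0$; (ii) support $\{a\}$, where the single remaining active constraint forces $\mu_a=\min(1,\epsilon_\omega/p_a)$, giving objective value $v_a$ of~\eqref{E36}; (iii) support $\{a,b\}$, at which \emph{both} $\sum_c\mu_c=1$ and $\sum_c\mu_c p_c=\epsilon_\omega$ hold, which --- taking $p_a<p_b$, hence $p_a\le\epsilon_\omega\le p_b$ --- determines $\mu_b=(\epsilon_\omega-p_a)/(p_b-p_a)$, $\mu_a=1-\mu_b$. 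Among the type-(ii) vertices the largest value is $v_m=\max_a v_a$ with $m$ as in~\eqref{E35}, and the corresponding vertex is precisely $\mu^u=\{\mu^u_0\}\cup\min(1,\epsilon_\omega/p_m)\,\mathbf{U}_m$; substituting $\mu^u$ into~\eqref{E38} returns the claimed value~\eqref{E28} for $\bar T^u_{su,c}$. So the proof is finished once the type-(iii) vertices are shown not to beat $v_m$.

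That last step is the one I expect to be the main obstacle. Since $p_a\le\epsilon_\omega$ along a feasible support $\{a,b\}$, the deterministic policy ``action $a$ with probability $1$'' is feasible, so $v_a=c_a$; writing the two-action value as $c_a+\mu_b(c_b-c_a)$, inserting $\mu_b=(\epsilon_\omega-p_a)/(p_b-p_a)$ and using $\epsilon_\omega=\mu_a p_a+\mu_b p_b$, the quantity ``value $-\,v_m$'' collapses to a nonnegative multiple of $c_j-p_j\,(c_m/p_m)$ for $j\in\{a,b\}$, so that everything reduces to the ``efficiency ordering'' $c_a/p_a\le c_m/p_m$ for all $a$ (equivalently, to exhibiting an LP-dual pair $(y_1,y_2)\ge 0$ with $p_a y_1+y_2\ge c_a$ for all $a$ and $\epsilon_\omega y_1+y_2=v_m$: take $y_2=0$, $y_1=c_m/p_m$ when $\epsilon_\omega\le p_m$, and $y_1=0$, $y_2=c_m$ otherwise). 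Proving this ordering is the delicate part --- it does not fall out of $v_a\le v_m$ by itself, because $v_a$ has two different forms according as $\epsilon_\omega\le p_a$ or $\epsilon_\omega>p_a$ --- and I would expect to have to use the explicit structure of~\eqref{E34} and~\eqref{E31}--\eqref{E33}, in particular the monotonicity of $\rho_{p,a}$ and of $c_a$ in the set of active secondary users, to close the argument; a direct exchange argument on an assumed-optimal mixed policy would be the natural fallback.
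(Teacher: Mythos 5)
Your overall strategy --- recognize Problem~\ref{P2} as a two-constraint linear program, restrict attention to its extreme points (at most two positive $\mu_a$'s), and compare the single-action vertices with the two-action vertices --- is a legitimate repackaging of what the paper does via Kuhn--Tucker case analysis in Appendix~\ref{A1}: your type-(ii) vertices are its cases 3--5, your type-(iii) vertices are its cases 8--11, and your dual pair $(y_1,y_2)$ plays the role of the multipliers $(\lambda_1,\lambda_2)$. Everything up to and including the identification of the best single-action vertex with $\mu^u$ and the value \eqref{E28} is correct.

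The step you flag as ``the delicate part'' is, however, a genuine gap, and it cannot be closed from the LP structure alone. You correctly reduce the claim to the efficiency ordering $c_a/p_a\le c_m/p_m$ for all $a$ (with $p_a=\rho_{p,a}-\rho_{p,0}$), and you correctly observe that it does not follow from $v_a\le v_m$. Indeed it can fail, and when it fails a type-(iii) vertex strictly beats $\mu^u$. Take $N=2$, $\epsilon_\omega=0.5$, $(c_1,p_1)=(1,0.05)$, $(c_3,p_3)=(10,0.9)$, with $c_2$ negligible: then $v_1=1$ and $v_3=50/9\approx 5.56$, so $m=3$ and \eqref{E28} gives $50/9$; but the two-action vertex $\mu_1=8/17$, $\mu_3=9/17$ satisfies both constraints (the budget one with equality) and yields $98/17\approx 5.76>50/9$, precisely because $c_1/p_1=20>c_3/p_3\approx 11.1$. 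So either one must invoke properties of the physical model (how $c_a$ and $\rho_{p,a}$ depend on the set of active SUs) that exclude such configurations --- neither you nor the paper does this --- or the conclusion must be amended to admit two-action mixtures. It is worth noting that the paper's own proof has exactly the same hole: in cases 8 and 11 of Appendix~\ref{A1} it bounds the two-action value by $\epsilon_\omega\max(d_j/p_j,d_k/p_k)$ and then simply drops these cases from the final list of optima without showing they are dominated by $v_m$; your write-up at least makes the missing inequality explicit. A correct completion along your lines would be to keep all extreme points in the comparison (i.e., define the upper bound as the LP optimum itself), or to exhibit the dual certificate at whichever vertex is actually optimal rather than fixing it at value $v_m$ in advance.
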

\begin{proof}
Using Lagrange multipliers $\lambda_1$ and $\lambda_2$, the
Lagrangian for Problem \ref{P2} is
\begin{align}
L=&\sum_{a=1}^{2^N-1}\mu_a{Dec2Bin_N(a).({T}_{s_1,a,(K,...,K)},...,{T}_{s_N,a,(K,...,K)})}-\lambda_1\biggl(\sum_{a=1}^{2^N-1}\mu_a({\rho_{p,a}-\rho_{p,0}})-\epsilon_{\omega}\biggr)-\nonumber\\
&\lambda_2(\sum_{a=1}^{2^N-1}\mu_a-1)
\end{align}\vspace{-4pt}
and then the Kuhn-Tucker conditions are as follows:
\begin{align}\label{E92}
&\frac{\partial L}{\partial\mu_i}\leq 0,\;\;\;\mu_i\geq0,
\;\;\;\mu_i\frac{\partial
L}{\partial\mu_i}=0\;\;\;\;\;i\in\{1,...,2^{N}-1\}\\\label{E93}
&\sum_{a=1}^{2^N-1}\mu_a({\rho_{p,a}-\rho_{p,0}})-\epsilon_{\omega}\leq
         0,\;\;\;\lambda_1\geq 0,\;\;\;\lambda_1\biggl(\sum_{a=1}^{2^N-1}\mu_a({\rho_{p,a}-\rho_{p,0}})-\epsilon_{\omega}\biggr)=0\\\label{E94}
&\sum_{a=1}^{2^N-1}\mu_a-1\leq 0,\;\;\;\lambda_2\geq
0,\;\;\;\lambda_2(\sum_{a=1}^{2^N-1}\mu_a-1)=0.
\end{align}
To solve the problem, we need to consider different situations for
the various inequalities. The complete proof is given in Appendix
\ref{A1}.
\end{proof}
Thus, if $m=1$ is the answer to \eqref{E35}, then only $SU_1$ can
access the channel while satisfying the PU throughput degradation
constraint. Thus, $v_1$ is proportional to the ratio of the $SU_1$
throughput over the relative PU throughput. Relative PU throughput
indicates the amount of reduction in the PU throughput if only
$SU_1$ transmits with respect to that when no SU transmits. The
result for the other selected $m$ can be interpreted in a similar
way.

\subsection{Low SU Access Rates Regime in Centralized Access Policy Design}
Now we consider the low SU access rate regime
$\epsilon_{\omega}\leq\omega_{init}$, where $\epsilon_{\omega}$ is
defined in \eqref{E23}. Proposition \ref{PR1} below characterizes
the optimum access policy for this access rate regime.

\begin{proposition}\label{PR1}
In the low SU access rate regime
$\epsilon_{\omega}\leq\omega_{init}$, the optimal access policy is
given by
\begin{align}\label{E6}
\mu^{*}=\{\mu^*_0,\mu^*_1,...,\mu^*_{2^N-1}\}=\{\mu^{*}_0\}\bigcup\acute{\mu}^{*},
\end{align}
where
\begin{align}\label{E41}
&\acute{\mu}^{*}=\begin{cases}(\frac{\epsilon_{\omega}}{\omega_{init}})\mathbf{U}_m&
\forall
s\in\mathcal{S_K}=\left\{(t,(K,...,K)):t\in\{1,2,...,T\}\right\}\\
0& \forall s\notin\mathcal{S_K},
\end{cases}
\end{align}
and
\begin{align}\label{E42}
&\bar{T}_{su,c}^{u}=(\frac{\epsilon_{\omega}}{\rho_{p,m}-\rho_{p,0}}){Dec2Bin_N(m).({T}_{s_1,m,(K,...,K)},...,{T}_{s_N,m,(K,...,K)})}
\end{align}
where $m$ is defined in \eqref{E35} and the other parameters are
given in Sections \ref{system model} and \ref{Rate}.
\end{proposition}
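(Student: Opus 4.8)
The plan is to display $\mu^{*}$ as a feasible policy for Problem~\ref{P1} whose value equals the upper bound of Proposition~\ref{PR2}; optimality is then immediate. First I would record two elementary facts. Since $\sum_{t=1}^{T}\pi(t,(K,...,K))\le 1$, equation~\eqref{E8} gives $\omega_{init}\le\rho_{p,m}-\rho_{p,0}$, so in the low-rate regime $\epsilon_{\omega}\le\omega_{init}$ one gets both $\epsilon_{\omega}/\omega_{init}\le 1$ and $\epsilon_{\omega}/(\rho_{p,m}-\rho_{p,0})\le 1$. The first inequality makes $\acute{\mu}^{*}$, completed by $\mu^{*}_{0}=1-\sum_{a\ge 1}\mu^{*}_{a}$, a bona fide probability distribution (on $\mathcal{S_K}$ it places mass $\epsilon_{\omega}/\omega_{init}$ on action $m$ and the rest on the idle action, and off $\mathcal{S_K}$ all mass on the idle action); the second shows that the ``min'' in \eqref{E28} is inactive here, so the right-hand side of \eqref{E42} is exactly the upper bound \eqref{E28} in this regime.

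Next I would verify feasibility. Since $\mu^{*}$ uses the nontrivial action $m$ only in states of $\mathcal{S_K}$ — there with probability $\epsilon_{\omega}/\omega_{init}$ — and the idle action elsewhere, and since $\rho_{p,0}-\rho_{p,0}=0$, the expectation in the constraint \eqref{E23} collapses to $(\epsilon_{\omega}/\omega_{init})(\rho_{p,m}-\rho_{p,0})\sum_{t=1}^{T}\pi(t,(K,...,K))$, which by the definition \eqref{E8} of $\omega_{init}$ is exactly $\epsilon_{\omega}$: the constraint holds with equality. The delicate point is that the stationary probabilities $\pi(t,(K,...,K))$ appearing here are the same ones used in \eqref{E8}, which were computed under $\mu_{init}$ rather than under $\mu^{*}$. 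I would justify this by noting that $\mu^{*}$ and $\mu_{init}$ differ only in the action taken inside $\mathcal{S_K}$, where every SU receiver already holds the PU message, so that FIC keeps the message-knowledge component of the next state equal to $(K,...,K)$ whatever action is played; what then remains is to check, from the transition structure of the chain of Section~\ref{SIII}, that the ARQ-state motion inside $\mathcal{S_K}$ leaves the total stationary mass of $\mathcal{S_K}$ unchanged under this rescaling. I expect this invariance of the occupation of $\mathcal{S_K}$ to be the main obstacle, the one place where the Markov dynamics must be handled explicitly rather than quoted.

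Finally I would evaluate the objective. In a state $(t,(K,...,K))\in\mathcal{S_K}$ the instantaneous SU sum throughput under action $m$ is $\sum_{n}T_{s_n,m,(K,...,K)}=Dec2Bin_N(m).(T_{s_1,m,(K,...,K)},...,T_{s_N,m,(K,...,K)})$ (the entries with $\varphi(m,n)=0$ are idle users and contribute zero), so from \eqref{E2} the value of $\mu^{*}$ equals $(\epsilon_{\omega}/\omega_{init})\,Dec2Bin_N(m).(T_{s_1,m,(K,...,K)},...,T_{s_N,m,(K,...,K)})\sum_{t=1}^{T}\pi(t,(K,...,K))$; inserting $\sum_{t}\pi(t,(K,...,K))=\omega_{init}/(\rho_{p,m}-\rho_{p,0})$ from \eqref{E8} reproduces \eqref{E42}. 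Since \eqref{E42} coincides with the bound \eqref{E28} (first paragraph) and Proposition~\ref{PR2} together with the argument preceding it shows that $\bar{T}_{su,c}^{u}$ upper-bounds the optimal value of Problem~\ref{P1}, the feasible policy $\mu^{*}$ attains that optimum, hence is optimal. As a sanity check I would note that at the boundary $\epsilon_{\omega}=\omega_{init}$ the policy $\mu^{*}$ degenerates to $\mu_{init}$, so the characterization matches the one for the complementary regime on their shared threshold.
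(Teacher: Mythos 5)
Your strategy is the same as the paper's: rescale $\mu_{init}$ by $\epsilon_{\omega}/\omega_{init}$, argue that the constraint \eqref{E23} is then met with equality, evaluate the objective, and observe that the result coincides with the upper bound \eqref{E28} (whose ``min'' you correctly show to be inactive here via $\sum_{t}\pi(t,(K,\ldots,K))\le 1$). You are in fact more careful than the paper in one respect: you explicitly flag that the stationary probabilities $\pi(t,(K,\ldots,K))$ entering \eqref{E8} are computed under $\mu_{init}$, whereas the chain whose constraint and reward you must evaluate runs under $\mu^{*}$, and you defer the proof that $\sum_{t}\pi(t,(K,\ldots,K))$ is invariant under the rescaling. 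The paper's own proof silently identifies the two distributions and never addresses this point.

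Unfortunately the deferred step is not merely unproven; it fails in general, so the proof cannot be completed along the lines you outline. The action taken in $\mathcal{S_K}$ enters the ARQ transition kernel \eqref{E101} through $\rho_{p,a}$: replacing action $m$ by the idle action with probability $1-\epsilon_{\omega}/\omega_{init}$ lowers the PU outage probability in $\mathcal{S_K}$ from $\rho_{p,m}$ to a convex combination of $\rho_{p,m}$ and $\rho_{p,0}$, which shortens the expected ARQ sojourn before the reset to $(1,(U,\ldots,U))$ and hence strictly decreases the stationary mass of $\mathcal{S_K}$ whenever $\rho_{p,m}>\rho_{p,0}$ (a direct computation of the stationary distribution for $N=1$, $T=3$ already exhibits this). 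Writing $\Pi_{\beta}=\sum_{t}\pi_{\beta}(t,(K,\ldots,K))$ for the mass under access probability $\beta$ in $\mathcal{S_K}$, the policy \eqref{E41} yields constraint value $(\Pi_{\beta}/\Pi_{1})\,\epsilon_{\omega}<\epsilon_{\omega}$ and throughput equal to $(\Pi_{\beta}/\Pi_{1})$ times \eqref{E42}; it is feasible but falls strictly short of the bound, so optimality does not follow. The repair is to define the randomization probability implicitly, choosing $\beta$ so that $\beta\,\Pi_{\beta}=\epsilon_{\omega}/(\rho_{p,m}-\rho_{p,0})$ --- such a $\beta$ exists by continuity and monotonicity, since $\beta\Pi_{\beta}$ increases from $0$ at $\beta=0$ to $\omega_{init}/(\rho_{p,m}-\rho_{p,0})$ at $\beta=1$ by \eqref{E8} --- or equivalently to argue directly on the occupation measures of the LP in Problem \ref{P3}; with that modification the remainder of your argument (feasibility with equality, value equal to \eqref{E28}, hence optimality) goes through.
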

\begin{proof}
With $\mu_{init}$ in \eqref{E7} (Definition \ref{D1}), the
constraint \eqref{E23} is equal to $\omega_{init}$ as given in
\eqref{E8}. However, for the low SU access rate regime,
$\epsilon_{\omega}$ is less than or equal to $\omega_{init}$. To
meet this stricter constraint, we can scale the access policy
$\mu_{init}$ in \eqref{E7} by
$\frac{\epsilon_{\omega}}{\omega_{init}}$ such that \eqref{E23} is
satisfied with equality. Therefore, $\mu^*$ in \eqref{E6} satisfies
the constraint. Replacing $\mu^*$ in \eqref{E22} we obtain
\begin{align}\label{E98}
\bar{T}_{su,c}(\mu)=\frac{\epsilon_{\omega}}{\omega_{init}}{Dec2Bin_N(m).({T}_{s_1,m,(K,...,K)},...,{T}_{s_N,m,(K,...,K)})}\sum_{t=1}^{T}{\pi(t,(K,...,K))}
\end{align}
Thus, substituting $\omega_{init}$ given in \eqref{E8} results in
the SU sum throughput as given in \eqref{E42}. Since the SU sum
throughput \eqref{E42} is equal to the upper bound \eqref{E28} in
the low SU access rate regime $\epsilon_{\omega}\leq\omega_{init}$,
the proposed access policy \eqref{E6} is optimal. Note that in the
low SU access rate regime since
$\epsilon_{\omega}\leq\omega_{init}$, we have
\begin{align}\label{E99}
\frac{\epsilon_{\omega}}{\rho_{p,m}-\rho_{p,0}}\leq1,
\end{align}
where $m$ is defined in \eqref{E35}.
\end{proof}
Proposition \ref{PR1} provides the conditions in which the SUs can
access the channel in the low SU access rate regime. As observed,
the SUs are not allowed to transmit if even one of the SU receivers
does not know the PU message.

\subsection{High SU Access Rates Regime in Centralized Access Policy Design}\label{HA}
In Problem \ref{P1}, we are looking for an optimum policy for the
CMDP problem. Therefore, for high SU access rate regime, we employ
the equivalent LP formulation corresponding to CMDP, e.g., see
\cite{R10}, \cite{R13}. To provide the equivalent LP, we need the
transition probability matrix of the Markov process denoted by $P$,
where $P_{s\acute{s},a}$ is the probability of moving from state $s$
to $\acute{s}$ if access action $a$ is chosen. Note that the only
allowable state with ARQ time $t=1$ is state $s=(1,(U,...,U))$ and
the process always restarts from $s=(1,(U,...,U))$ when the maximum
number of PU retransmissions or the successful decoding of the PU
messages occurs. To obtain the transition probability matrix
$P_{{s}\acute{{s}},a}$, we need to compute the transition
probability matrix of the PU Markov model $Q_{t\acute{t},a}$ as
given in \eqref{E101}, which is the probability that the primary ARQ
state $t$ is transferred to $\acute{t}$ if access action $a$ is
selected.
\begin{align}\label{E101}
&Q_{t\acute{t},a}=\begin{cases} 1&\text{if }\acute{t}=1,t=
T\\
1-\rho_{p,a}&\text{if }
\acute{t}=1,t\neq T\\
 \rho_{p,a}&\text{if }\acute{t}=t+1,t\neq T \\
0&\text{otherwise}.
\end{cases}
\end{align}

Thus, $P_{s\acute{s},a}=P_{({t},{\Phi})(\acute{t},\acute{\Phi}),a}$
is given by
\begin{align}
P_{({t},{\Phi})(\acute{t},\acute{\Phi}),a}=Q_{t\acute{t},a}Pr(\acute{\Phi}|\Phi,a),
\end{align}
where $Pr(\acute{\Phi}|\Phi,a)$, the probability that the PU message
knowledge state $\Phi$ is changed to state $\acute{\Phi}$ given
action $a$, is obtained as follows:
\begin{align}\label{E102}
Pr(\acute{\Phi}|\Phi,a)=\prod_{n=1}^N F_n({\Phi},a),
\end{align}
where
\begin{align}
F_n({\Phi},a)=\begin{cases}\rho_{ps_n,a,\Phi}&\text{if } \phi(n)=U
\text{ and }
\acute{\phi}(n)=U\\
1-\rho_{ps_n,a,\Phi}&\text{if } \phi(n)=U \text{ and }
\acute{\phi}(n)=K\\
1&\text{if } \phi(n)=K \text{ and }
\acute{\phi}(n)=K\\
0&\text{if } \phi(n)=K \text{ and } \acute{\phi}(n)=U
\end{cases}
\end{align}
and $\rho_{ps_n,a,\Phi}$ is the probability that $SU_{rxn}$ is not
able to decode the PU message in PU message knowledge state $\Phi$
if access action $a$ is selected.

For any unichain Constrained Markov Decision Process, there exists
an equivalent LP formulation, where an MDP is unichain if it
contains a single recurrent class plus a (perhaps empty) set of
transient states \cite{R29}. Since the transition probability of
moving from every state to state $s=(1,\{U,...,U\})$ is not zero,
our CMDP model is unichain. Thus, the following problem formalizes
the equivalent LP for Problem \ref{P1} \cite{R10}:
\begin{problem}\label{P3}
\begin{align}
&\underset{x}{\operatorname{maximize}}
\sum_{s\in\mathcal{S}}\sum_{a\in A}{\sum_{n=1}^N {T}_{s_n,a,\Phi}x(s,a)}\text{\;\;s.t.}\\
&\sum_{s\in\mathcal{S}}\sum_{a\in A}{(\rho_{p,a}-\rho_{p,0})x(s,a)}\leq\epsilon_{\omega}\\
 &\sum_{a\in A}x(\acute{s},a)-\sum_{s\in\mathcal{S}}\sum_{a\in
A}{P_{s\acute{s},a}\;x(s,a)}=0
\;\;\;\forall\acute{s}\in\mathcal{S}\\
&\sum_{s\in\mathcal{S}}\sum_{a\in A}{x(s,a)}=1 \\
&x(s,a)\geq 0\;\;\;\forall s\in\mathcal{S},\;a\in A.
\end{align}
\end{problem}
Note that since the number of actions and states are respectively
equal to $2^N$ and $2^N *(T-1)+1$, the computational complexity of
the LP approach is the order of $2^{2N}$. The relationship between
the optimal solution of Problem \ref{P3} and the solution to the
considered Problem \ref{P1} is obtained as follows \cite{R10}:
\begin{align}
\mu(a,s)=\begin{cases}\frac{x(s,a)}{\sum_{\acute{a}\in
A}{x(s,\acute{a})}}&\text{if}\sum_{\acute{a}\in A}{x(s,\acute{a})}>0\\
\text{arbitrary}&\text{otherwise}.
\end{cases}
\end{align}

All cases of practical interest considered in this paper correspond
to a unichain CMDP. For the equivalent linear problem corresponding
to the general case of a multichain CMDP, the reader is referred to
\cite{R13}.

\section{Decentralized Access Policies for SUs in MMDP Model}\label{S_DeC}
In this section, we assume that there is no central unit to control
the access policy of the SU transmitters. Therefore, each SU has to
control its own access policy independently. We also assume that the
PU message knowledge state of each SU receiver is known to all SU
transmitters (e.g., the $SU_{rxm}$ sends back its PU message
knowledge state on an error free feedback channel, which is heard by
all SU transmitters). Hence, the state $s$ defined in Section
\ref{SIII} is known to all transmitters. However, since there is no
central unit, there is no coordination among the SUs, and $SU_m$
does not know the action selected by $SU_n$, $n\neq m$. Thus, each
secondary user knows the state of the MDP but not the action
selected by the other users. In this case, the $PU-SU_1-...-SU_N$
system may be modeled by an Multi-agent Markov Decision Process
$s=(t,\Phi)$ \cite{R23}, where $t$ and $\Phi$ are defined in Section
\ref{SIII}. The set of all states is indicated by $\mathcal{S}$. In
contrast to the centralized scenario, we have $N$ policies $\mu_n$,
$n=\{1,...,N\}$, which map the state of the network $s$ to the
probabilities that each secondary user $n$ takes access action
$a_n\in A_n=\{0,1\}$. The probability that action $a_n$ is selected
by $SU_n$ in state $s$ is denoted by $\mu_n(a_n,s)$, where $a_n=0$
if $SU_{txn}$ does not transmit, and $a_n=1$ otherwise ($SU_{txn}$
transmits). We use the notation $\mu=(\mu_1,...,\mu_N)$ for the
access policy of the system in the decentralized case.  As denoted,
the objective is to maximize the average long term sum throughput of
the SUs under the long term average PU throughput constraint as
formalized in Problem \ref{P5}, where all throughputs are influenced
by the actions selected by the $N$ users.
\begin{problem}\label{P5}
\begin{align}\label{E5}
&\underset{\mu_1(a_1,s),...,\mu_N(a_N,s)}{\operatorname{maximize}}{\;\bar{T}_{su,d}(\mu_1,...,\mu_N)=
\text{E}_{a,s=(t,\Phi)}\left[{T}_{s_1,a,\Phi}+...,+{T}_{s_N,a,\Phi}\right]}\text{\;\;s.t.}
\end{align}
\begin{align}\label{E16}
&D(\mu_1,...,\mu_N)=\text{E}_{a,s=(t,\Phi)}\left[{\rho_{p,a}-\rho_{p,0}}\right]\leq
\epsilon_{\omega},
\end{align}
where $a=Bin2Dec(a_N,a_{N-1},...,a_1)$, $\epsilon_{\omega}$ is
defined in Section \ref{SIII}; and $\mu_n(a_n,s)$ is the probability
that access action $a_n$ is selected at transmitter $SU_n$, given
system state $s$.
\end{problem}
Since the access policy designed in Section \ref{SIII} is a
randomized policy \cite{R10}, in general we can not find an access
policy for each SU from the proposed centralized access policy. For
example, assume $N=2$ and the centralized optimum policy
$\mu=[0.3,\; 0,\; 0,\; 0.7]$, which cannot be implemented in a
distributed way. This is because that we can find the two
probabilities $\nu_1=\mu_1(0,s)$ and $\nu_2=\mu_2(0,s)$ by solving
the two equations $\nu_1 \nu_2 = 0.3$ and $(1-\nu_1)(1-\nu_2)=0.7$,
but the solution would be incompatible with
$\nu_1(1-\nu_2)=\nu_2(1-\nu_1)=0$. This is actually a result of the
fact that in the centralized solution we pick a probability
distribution over $2^N$ values, which has $2^N-1$ degrees of
freedom, whereas in the decentralized scenario we pick $N$ binary
distributions, with only $N$ degrees of freedom, and therefore there
always exist centralized distributions that cannot be obtained by
combining $N$ binary distributions for any $N>1$.

In the sequel, a scheme based on Nash Equilibrium is proposed, which
finds the local optimum policies by converting the CMMDP to a CMDP
\cite{R19}, \cite{R20}.

\subsection{Decentralized Access policy Design Using Nash Equilibrium}\label{AS-E}
We employ Nash Equilibrium, in which no user has an interest in
unilaterally changing its policy. In fact, $SU_n$ transmitter
designs its optimal policy by assuming fixed policies for the other
SUs. This procedure for different SUs continues until there is no
benefit in employing more iterations. Assuming fixed policies
$\mu_n$ for $SU_n$, the problem for $SU_m$, $m\neq n$ can be
considered as a CMDP, referred to as $CMDP_m$. The state space of
the new model is the same as the system state $\mathcal{S}$. In
fact, since the system state $s$ is known for all users, the state
of $CMDP_m$ is $s=(t,(\phi(1),...,\phi(N)))$. $SU_{txm}$ chooses
action $a_m$ from the set $A_m=\{0,1\}$, where for $a_m=1$ and
$a_m=0$ the $SU_m$ does or does not transmit respectively. Problem
\ref{P7} below formalizes the new optimization problem for $SU_m$
assuming fixed stationary policies for all $SU_n$, $n\neq m$.
\begin{problem}\label{P7}
\begin{align}\label{E73}
&\underset{\mu_m(a_m,s)}{\operatorname{maximize}}{\;\text{E}_{a_m,s=(t,\Phi)}\left[\sum_{a_n,\forall
n\neq m}({T}_{s_1,a,\Phi}+...+{T}_{s_N,a,\Phi})\prod_{n=1,n\neq
m}^N\mu_n(a_n,s)\right]}\text{\;\;s.t.}
\end{align}
\begin{align}\label{E16}
&\text{E}_{a_m,s=(t,\Phi)}\left[\sum_{a_n,\forall n\neq
m}({\rho_{p,a}-\rho_{p,0}})\prod_{n=1,n\neq
m}^N\mu_n(a_n,s)\right]\leq \epsilon_{\omega},
\end{align}
where $\epsilon_{\omega}$ is defined in Section \ref{SIII}; and
$\mu_m(a_m,s)$ is the probability that access action $a_m$ is
selected in state $s$ by the $SU_m$ transmitter.
\end{problem}
Assume a fixed stationary policy $\mu_n$ for $SU_n$, $\forall
n\in\{1,...,N\}$, $n\neq m$. The problem for $SU_m$ is a CMDP
characterized by tuple $(s,\acute{P}^m,\acute{r}^m,\acute{d}^m)$,
where
\begin{align}
\acute{P}^m_{s\acute{s},a_m}=\sum_{a_n,\forall n\neq
m}P_{s\acute{s},a}\prod_{n=1,n\neq m}^N\mu_n({a_n},s),
\end{align}
\begin{align}
\acute{r}^m_{s,a_m}=\sum_{a_n,\forall n\neq
m}({T}_{s_1,a,\Phi}+...+{T}_{s_N,a,\Phi})\prod_{n=1,n\neq
m}^N\mu_n({a_n},s),
\end{align}
\begin{align}
\acute{d}^m_{s,a_m}=\sum_{a_n,\forall n\neq
m}({\rho_{p,a}-\rho_{p,0}})\prod_{n=1,n\neq m}^N\mu_n({a_n},s).
\end{align}
 $\acute{P}^m$, $\acute{r}^m$ and $\acute{d}^m$, respectively are
the transition matrix probability, the instantaneous reward function
and the instantaneous cost function in the new model and
 $P_{s\acute{s},a}$ is the transition probability of the
system.

As explained in Section \ref{HA}, there is an equivalent LP
formulation for any unichain CMDP, and the LP formulation
corresponding to $CMDP_m$ described in Problem \ref{P7}  is given by
\begin{problem}\label{P8}
\begin{align}
&\underset{x^m}{\operatorname{maximize}}
\sum_{s\in\mathcal{S}}\sum_{a_m\in A_m}{\acute{r}^m_{s,a_m}x^m(s,a_m)}\text{\;\;s.t.}\nonumber\\
&\sum_{s\in\mathcal{S}}\sum_{a_m\in A_m}{\acute{d}^m_{s,a_m}x^m(s,a_m)}\leq\epsilon_{\omega}\nonumber\\
 &\sum_{a_m\in A_m}x^m(\acute{s},a_m)-\sum_{s\in\mathcal{S}}\sum_{a_m\in
A_m}{\acute{P}^m_{s\acute{s},a_m}\;x^m(s,a_m)}=0
\;\;\;\forall\acute{s}\in\mathcal{S}\nonumber\\
&\sum_{s\in\mathcal{S}}\sum_{a_m\in A_m}{x^m(s,a_m)}=1\nonumber \\
&x^m(s,a_m)\geq 0\;\;\;\forall s\in\mathcal{S},\;a_m\in A_m.
\end{align}
\end{problem}
The relationship between the optimal solution of LP Problem \ref{P8}
and the solution to the considered Problem \ref{P7} is also obtained
as follows
\begin{align}\label{E74}
\mu_m({a_m},s)=\begin{cases}\frac{x^m(s,a_m)}{\sum_{\acute{a_m}\in
A_m}{x^m(s,\acute{a_m})}}&\text{if}\sum_{\acute{a_m}\in A_m}{x^m(s,\acute{a_m})}>0\\
\text{arbitrary}&\text{otherwise}.
\end{cases}
\end{align}

As denoted, $SU_m$ computes the optimum policy as given in
\eqref{E74} by considering fixed policies for other $SUs$. By
changing $m\in\{1,...,N\}$, this procedure iteratively continues
until an equilibrium is achieved. (We prove later in Proposition
\ref{PR9} that an equilibrium point is always achieved.) Algorithm 1
below describes the local optimal solution to Problem \ref{P5} based
on Nash Equilibrium. The obtained access policies are local optimum
solutions. We have to restart Algorithm 1 for several random
initiations and see whether the resulting SU sum throughput is
higher.
\begin{algorithm*}\label{AL1}
\caption{Local Optimum Policy using Nash Equilibrium}
\begin{enumerate}
\item Choose initial stochastic policies $\mu_n$ $\forall n\in\{1,...,N\}$ and
select $m=1$, $l=1$ and $\mu^1=(\mu_1 ,...,\mu_N)$.
\item To compute optimum policy
$\mu_m$ for $SU_m$, obtain the solution to Problem \ref{P7} as given
in \eqref{E74} for given $\mu_n,$ $\forall n\neq m$,
\item Select $l=l+1$, $m=m+1$ and $\mu^l=(\mu_1 ,...,\mu_N)$. If $m=N+1$, then
$m=1$.
\item If $\mu^l=\mu^{l-1}$, then go step 5. Else go step 2.
\item  $\mu_n,\;\forall n\in\{1,...,N\}$ are the local optimum
solution to original DEC-MMDP Problem \ref{P5}.
\end{enumerate}
\end{algorithm*}
We have the two following propositions related to Nash Equilibrium.

\begin{proposition}\label{PR8}
Optimum access policies $\mu_n^*,\;\forall n\in\{1,...,N\}$ solution
to Problem \ref{P5} are a fixed point or an equilibrium point.
\end{proposition}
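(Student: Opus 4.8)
The plan is to show that the optimal decentralized access policies $\mu_n^*$ constitute a Nash equilibrium of the game in which each player $SU_m$ best-responds to the fixed policies of the others, as formalized in Problem \ref{P7}. First I would observe that Problem \ref{P5} is a team problem: all agents share the identical objective $\bar T_{su,d}(\mu_1,\dots,\mu_N)$ and the identical constraint \eqref{E16}. Therefore, at an optimal point $(\mu_1^*,\dots,\mu_N^*)$, no single agent $SU_m$ can, by unilaterally replacing $\mu_m^*$ with some other feasible $\mu_m$, strictly increase the common objective while keeping the constraint satisfied — otherwise $(\mu_1^*,\dots,\mu_{m-1}^*,\mu_m,\mu_{m+1}^*,\dots,\mu_N^*)$ would be a feasible point of Problem \ref{P5} with strictly larger value, contradicting optimality.

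Next I would make this precise by noting that when the policies $\mu_n^*$, $n\neq m$, are held fixed, the objective \eqref{E5} and constraint \eqref{E16} of Problem \ref{P5}, viewed as functions of $\mu_m$ alone, are exactly the objective \eqref{E73} and constraint of Problem \ref{P7} for $SU_m$: averaging over the actions $a_n$ of the other users against their fixed stationary distributions $\mu_n^*(a_n,s)$ reduces the $N$-agent quantities to the single-agent CMDP $(s,\acute P^m,\acute r^m,\acute d^m)$ described after Problem \ref{P7}. Hence $\mu_m^*$ is feasible for $CMDP_m$, and by the argument of the previous paragraph it also achieves the maximum value of Problem \ref{P7} — any strict improvement for $SU_m$ in $CMDP_m$ would translate verbatim into a strict improvement for the team in Problem \ref{P5}. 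Thus $\mu_m^*$ is a best response to $\{\mu_n^*\}_{n\neq m}$ for every $m$, which is precisely the statement that $(\mu_1^*,\dots,\mu_N^*)$ is a fixed point of the best-response map iterated in Algorithm 1 — i.e., an equilibrium point.

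The main obstacle I anticipate is a subtlety rather than a deep difficulty: one must verify that restricting to stationary (rather than history-dependent or non-stationary) policies costs nothing here, so that the best-response subproblem is genuinely the CMDP of Problem \ref{P7}. This is justified by the unichain structure established in Section \ref{HA} — for a unichain CMDP there is an optimal stationary policy and an equivalent LP (Problem \ref{P8}) — applied to $CMDP_m$, whose transition matrix $\acute P^m$ inherits the property that state $s=(1,\{U,\dots,U\})$ is reachable from every state, so $CMDP_m$ is itself unichain for any fixed stationary $\{\mu_n\}_{n\neq m}$. A second minor point worth stating is that the feasible region of Problem \ref{P5} is nonempty (the all-idle policy $\mu_n\equiv 0$ gives degradation $0\le\epsilon_\omega$) so an optimum $(\mu_1^*,\dots,\mu_N^*)$ exists, making the argument non-vacuous. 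With these remarks in place the proposition follows directly from the team-optimality reduction.
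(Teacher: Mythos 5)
Your proposal is correct and follows essentially the same route as the paper's own proof: both arguments rest on the observation that, with the other users' policies frozen at their optimal values, Problem \ref{P7} for $SU_m$ is exactly the restriction of Problem \ref{P5} to the single coordinate $\mu_m$, so joint optimality forces each $\mu_m^*$ to be a best response and hence the tuple is a fixed point of the iteration. Your added remarks on the feasible-set correspondence, the unichain/stationarity issue, and nonemptiness of the feasible region are harmless refinements of the same argument rather than a different approach.
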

\begin{proof}
If $\mu_n^*,\;\forall n\in\{1,...,N\}$ are the optimum solutions to
problem \ref{P5}, then
\begin{align}\label{E75}
\bar{T}_{su,d}(\mu^*_1,...,\mu^*_N)\geq
\bar{T}_{su,d}(\mu_1,...,\mu_N),
\end{align}
where $(\mu_1,...,\mu_N)$ belongs to the set of all feasible
solutions (i.e., the set of polices that satisfy the constraint in
Problem \ref{P5}) and $\bar{T}_{su,d}(\mu^*_1,...,\mu^*_N)$ is given
in Problem \ref{P5}. Now suppose that the policy for $SU_n\;\forall
n\neq 1$ is fixed to $\mu_n^*$. Note that $\bar{T}_{su,d}(\mu_{1})$
in Problem \ref{P7} is equal to
$\bar{T}_{su,d}(\mu_1,\mu^*_2,...,\mu^*_N)$ in Problem \ref{P5}.
Thus, noting \eqref{E75}, we have
\begin{align}
\bar{T}_{su,d}(\mu_{1})\leq
\bar{T}_{su,d}(\mu^*_1,\mu^*_2,...,\mu^*_N)
\end{align}
and if $\mu_1$ is equal to $\mu^*_1$, equality occurs. Thus, point
$(\mu^*_1,...,\mu^*_N)$ is a fixed point. In other words, this fixed
point is an equilibrium where no user can get any more benefit in SU
sum throughput by more iterations.
\end{proof}

\begin{proposition}\label{PR9}
The SU sum throughput obtained by solving Algorithm 1 improves as
the iteration index $l$ increases and furthermore, the iterative
procedure based on Algorithm 1 converges to a fixed point.
\end{proposition}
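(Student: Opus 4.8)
The plan is to establish two facts in sequence: (i) the SU sum throughput is nondecreasing along the iterations of Algorithm~1, and (ii) the iterates $\mu^l$ converge to a fixed point. For (i), I would focus on a single update step, say the step in which $SU_m$ recomputes its policy while $\mu_n$ for $n\neq m$ are held fixed. By construction (Problem~\ref{P7} and its LP reformulation in Problem~\ref{P8}), the updated policy $\mu_m$ is an \emph{optimal} solution of the constrained problem faced by $SU_m$ given the others' policies; hence it achieves an objective value at least as large as that obtained by keeping the previous $\mu_m$, which is itself a feasible point of that same problem (feasibility is preserved because the constraint in Problem~\ref{P7} is exactly the system constraint \eqref{E16} evaluated with the fixed $\mu_n$'s, and the old $\mu_m$ satisfied it). Since, as noted in the proof of Proposition~\ref{PR8}, $\bar{T}_{su,d}(\mu_m)$ in Problem~\ref{P7} equals $\bar{T}_{su,d}(\mu_1,\dots,\mu_N)$ in Problem~\ref{P5} with the other components fixed, we get $\bar{T}_{su,d}(\mu^{l})\geq \bar{T}_{su,d}(\mu^{l-1})$ for every $l$. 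Thus the sequence $\bigl(\bar{T}_{su,d}(\mu^l)\bigr)_l$ is monotone nondecreasing.

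For convergence, the key observation is that this monotone sequence is bounded above — for instance by the centralized optimum, or more simply by the finite quantity $\sum_{s}\sum_{n}\max_a T_{s_n,a,\Phi}$ — and therefore converges to some limit $\bar T^\infty$. It remains to argue that the \emph{policies} themselves (not merely the throughput values) settle down. I would invoke the stopping condition of Algorithm~1: the algorithm terminates exactly when $\mu^l=\mu^{l-1}$, i.e., when a full round of $N$ single-user updates produces no change; at that point each $\mu_n$ is simultaneously optimal against the others, which is precisely a Nash equilibrium / fixed point of the best-response map, matching the characterization in Proposition~\ref{PR8}. The remaining point is to rule out the algorithm cycling forever without ever hitting $\mu^l=\mu^{l-1}$. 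Here I would use the standard best-response / potential-game style argument: since the throughput is nondecreasing and bounded, it converges; once it is within any $\varepsilon$ of $\bar T^\infty$, no single-user update can strictly increase it, so the solver of Problem~\ref{P8} may as well return the incumbent policy (using the ``arbitrary'' branch of \eqref{E74} or the tie-breaking freedom in the LP), giving $\mu^l=\mu^{l-1}$ and triggering termination. Hence the iterative procedure converges to a fixed point.

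The main obstacle is the gap between convergence of the scalar objective and convergence of the policy vector: a monotone bounded objective alone does not forbid the $\mu^l$ from oscillating among several policies that all attain (in the limit) the same throughput. Handling this rigorously requires either (a) appealing to uniqueness/selection in the LP of Problem~\ref{P8} so that the best response is a well-defined map and equal-objective implies equal-policy, or (b) adopting the convention that the LP solver returns the previous policy whenever it is still optimal, which makes the termination test $\mu^l=\mu^{l-1}$ a genuine fixed-point condition. I expect the paper takes route (b) implicitly, so I would state that convention explicitly and then the argument closes: monotone bounded throughput $\Rightarrow$ it converges $\Rightarrow$ eventually no user benefits $\Rightarrow$ $\mu^l=\mu^{l-1}$ $\Rightarrow$ termination at a fixed point, which by Proposition~\ref{PR8} is an equilibrium of the original decentralized problem.
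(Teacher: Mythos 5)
Your argument is essentially the paper's: the paper likewise shows that each single-user best response cannot decrease $\bar{T}_{su,d}$ (your feasibility remark, that the old $\mu_m$ remains admissible in the new subproblem because its constraint is the system constraint evaluated with the others' fixed policies, is left implicit there) and then concludes convergence from monotonicity plus boundedness by the centralized optimum. The gap you flag --- that convergence of the scalar objective does not by itself force the policy iterates $\mu^l$ to settle --- is real, and the paper does not close it either; it only establishes convergence of the throughput values. Your proposed fix via a tie-breaking convention in the LP of Problem~\ref{P8} is a sensible patch, but be aware that your sub-argument ``within $\varepsilon$ of the limit implies no single-user update can strictly increase the objective'' is not literally true (the increments could remain strictly positive while shrinking to zero), so exact termination with $\mu^l=\mu^{l-1}$ in finitely many steps still needs an additional assumption such as finiteness of the set of LP basic optimal solutions or the explicit incumbent-retention rule you describe.
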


\begin{proof}
Suppose $\mu^{l}_n\;\forall n\in\{1,...,N\} $  are the resulting
policies in iteration $l$ of the algorithm and the resulting SU sum
throughput is given by $\bar{T}_{su,d}(\mu^{l}_1,...,\mu^{l}_N)$.
Now we consider $\mu^{l}_n\;n\neq 1$ to be fixed and improve
$\mu^{l}_1$ to $\mu^{l+1}_1$ according to the algorithm. Therefore,
$\mu^{l+1}_1$ is the optimum solution to Problem \ref{P7} and we
have
\begin{align}
\bar{T}_{su,d}(\mu^{l+1}_1)\geq \bar{T}_{su,d}(\mu^{l}_1)
\end{align}
or equivalently
\begin{align}
\bar{T}_{su,d}(\mu^{l+1}_1,\mu^{l}_2,...,\mu^{l}_N)\geq
\bar{T}_{su,d}(\mu^{l}_1,\mu^{l}_2,...,\mu^{l}_N).
\end{align}
Since $\bar{T}_{su,d}(\mu^{l}_1,...,\mu^{l}_N)$ and
$\bar{T}_{su,d}(\mu^{l+1}_1,\mu^{l}_2,...,\mu^l_N)$ are the SU sum
throughput respectively in iterations $l$ and $l+1$, it is observed
that the SU sum throughput can not decrease as the algorithm
proceeds. The same approach could be seen when the policy for
$SU_n,\;n\neq1$ improves and the policies of the other SUs are
constant. This shows that the SU sum throughput is an increasing
function with respect to $l$. Since the performance is bounded by
that of the centralized access policy design, it is proved that the
proposed algorithm converges.
\end{proof}
Propositions \ref{PR8} and  \ref{PR9} prove that the optimum
solution to the decentralized access policy design is an equilibrium
point and the decentralized access policy design based on Algorithm
1 converges to a fixed point.

\section{Numerical Results}\label{SIV}
For numerical evaluations we consider a CRN with $N$ SUs,
$N\in\{1,2,3\}$, and Rayleigh fading channels. Thus, the SNR
$\gamma_{x}$ is an exponentially distributed random variable with
mean $\bar{\gamma}_{x}$, where $x\in\{pp,ps_n,s_ns_m,s_np\}$,
$n,m\in\{1,...,N\}$. We consider the following parameters throughout
the paper, unless otherwise mentioned. Following \cite{R9}, we
consider the average SNRs $\bar{\gamma}_{pp}=10$,
$\bar{\gamma}_{s_ns_n}=5$, $\bar{\gamma}_{ps_m}=5$,
$\bar{\gamma}_{s_np}=2$, and $\bar{\gamma}_{s_ns_m}=3$, $m\neq n$.
The ARQ deadline is $T=5$. The PU rate $R_p$ is selected such that
the PU throughput is maximized when all SUs are idle, i.e.,
$R_p=\operatorname{argmax}_{R}{T_{pu}^{I}(R)}$. Thus, we set
$R_p=2.52$ and $T_{pu}^{I}=1.57$. The PU throughput constraint is
set to $(1-\epsilon_{PU}) T_{pu}^{I}$, where $\epsilon_{PU}=0.2$. In
the centralized case the rates $R^*_{s_n,a,\Phi}$,
$n\in\{1,...,N\}$, are computed as
$(R^*_{s_1,a,\Phi},...,R^*_{s_N,a,\Phi})=\operatorname{argmax}_{R_{s_1,a,\Phi},...,R_{s_N,a,\Phi}}{T_{s_1,a,\Phi}+...+T_{s_N,a,\Phi}}$
so as to maximize the SU sum throughput, where $T_{s_n,a,\Phi}$,
$n\in\{1,...,N\}$ is a function of $R_P$ (only if the PU message
knowledge state is unknown for receiver $SU_n$) and of all
$R_{s_m,a,\Phi}$, $\forall m\in\{1,...,N\}$. In the decentralized
case, the rate $R_{s_n,a,\Phi}$ is selected so as to maximize
$T_{s_n,a,\Phi}$, irrespective of the other SU transmissions.

We remark that the SU access policies are randomized, in the sense
that, for a given system state, different channel access outcomes
are possible with different probabilities. In the centralized case,
the policy is given by the joint distribution of the channel access
actions by all $N$ SUs, whereas in the decentralized case each SU
makes its own randomized binary decision about whether or not to
access the channel.

The scheme ``Forward Interference Cancelation'' discussed here is
called ``FIC''. The centralized and decentralized access policy
designs are respectively referred to as ``FIC Decentralized'' and
``FIC Centralized''. For the centralized policy design, the
performance bound described in Section \ref{UPPER} is referred to as
``PM already Known''. To validate the SU sum throughput obtained by
Problem 3, we use access policies proposed by "FIC Centralized" in a
Monte-Carlo simulation, compute the SU sum throughput and PU
throughput degradation and refer to it as ``FIC
Centralized-Monte-Carlo''. In addition, we also consider the
scenario without using FIC, referred to as ``No FIC'' in the
centralized access policy design. Note that ``FIC: One SU'' denotes
the case that only one SU exists in the CRN and its receiver applies
FIC.


The SU sum throughput with respect to the PU throughput by varying
the value of $\epsilon_{PU}$ for $N=1,\;2,\;3$ is depicted in Fig.
\ref{F3}. As can be observed from Fig. \ref{F3}, ``FIC
Centralized-Monte-Carlo'' matches the SU sum throughput obtained by
the solution to Problem 3. It is also obvious that as the PU
throughput $T_{pu}^{I}(1-\epsilon_{PU})=1.57(1-\epsilon_{PU})$
increases, the average sum throughput of SUs decreases. PU
throughputs greater than $1.286$ and $1.224$ ($\epsilon_{PU}<0.22$
and $\epsilon_{PU}<0.18$) correspond to the low SU access rate
regime respectively for centralized and decentralized scenarios with
two SUs. The FIC performance is the same as that of the upper bound
(``PM already Known'' scheme) for the low SU access rate regime. As
can be observed, each CRN scenario provides a constant SU sum
throughput for a loose enough constraint on the PU throughput. There
is also a performance loss with applying the decentralized approach
with respect to the centralized one in CRN with either $N=2$ or
$N=3$ SUs, especially for a loose PU throughput constraint. Our
simulation results show that this loss in the decentralized scenario
is because the assigned rate to each SU does not account for the
decision made by the other SUs, whereas in the centralized case the
rates are jointly assigned. In fact, when the rates assigned to the
SUs in the decentralized case are the same as those in the
centralized case, our proposed decentralized design has the same
performance as the centralized design. It can be seen that the
decentralized scenario with $N=3$ provides a performance similar to
 $N=2$ even for a loose PU throughput constraint and this is
because the SUs interfere more with each other when the rate at each
SU is assigned irrespective of the other SUs. Thus, increasing the
number of SUs generates more interference at the SU receivers and
requires the SUs to reduce their access to the channel. The results
also reveal that the trend of the tradeoff curve between PU and SU
sum throughput is the same for all FIC schemes regardless of $N$,
and the slope of the tradeoff curves after departing from the ``PU
always known'' curve is the same in all cases including ``No FIC'',
where the difference among the various cases is the value on which
the various curves settle in the loose PU throughput constraints.
Thus, from the results of Fig. \ref{F3}, it can be concluded that,
despite the obvious quantitative differences (SU sum throughput is
higher when more SUs are present and the PU throughput constraint is
loose), the trends of all curves are very similar. For this reason,
in the rest of this section, in order to keep the plots more
readable, we will focus on the simpler case $N=2$, with the
understanding that for $N=3$ we will have curves with similar
behaviors and slightly better throughput.

\begin{figure}[t!]
\centering
\includegraphics[width=3.5 in]{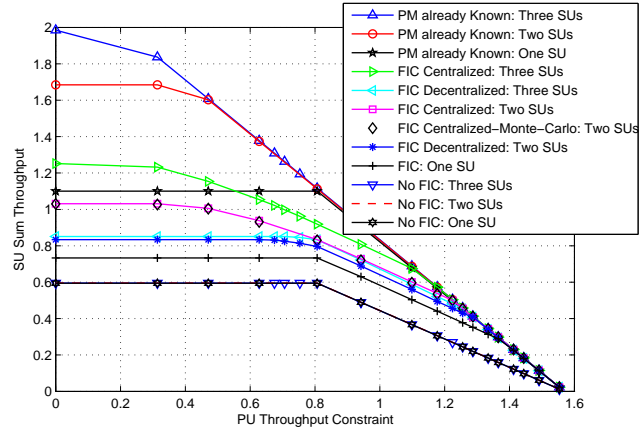}\caption{Average sum throughput of SUs with respect to PU throughput constraint $(1-\epsilon_{PU}) T_{pu}^{I}$. $\bar{\gamma}_{ps_n}=5$, $\bar{\gamma}_{s_np}=2$,
 $\bar{\gamma}_{pp}=10$, $\bar{\gamma}_{s_ns_n}=5$ and $\bar{\gamma}_{s_ns_m}=3$, $n,m\in\{1,...,N\}$, $n\neq m$.}\label{F3}
\end{figure}

The average sum throughput of SUs as a function of
$\bar{\gamma}_{s_1p}$ is depicted in Fig. \ref{F1}, where
$\bar{\gamma}_{s_2p}=2$. As observed\footnote{Note that the SUs
interfere with each other in this paper, whereas the interference
between the SUs has been neglected in \cite{R22}.}, the SU sum
throughput decreases as $\bar{\gamma}_{s_1p}$ increases. This is
because $\bar{\gamma}_{s_2p}=2$ and hence, the PU throughput
degradation constraint is always active for the two SUs. A similar
plot for the case $\bar{\gamma}_{s_2p}=\bar{\gamma}_{s_1p}$ is
depicted in Fig. \ref{F8}. As observed, for
$\bar{\gamma}_{s_1p}<0.5$, $\bar{\gamma}_{s_1p}<0.25$ and
$\bar{\gamma}_{s_1p}<0.45$ respectively in the CRN with one SU,
centralized and decentralized cases, we have a different result. In
fact, because the interference power of SUs has little effect on the
PU receiver, initially the PU throughput degradation constraint is
not active and therefore $SU_{tx1}$ and $SU_{tx2}$ may utilize their
powers to maximize their own throughput. Note that the action
obtained for the SUs when $\bar{\gamma}_{s_1p}=0.25$ can not be used
for $\bar{\gamma}_{s_1p}<0.25$. In fact, notice that as
$\bar{\gamma}_{s_1p}$ and $\bar{\gamma}_{s_2p}$ increase, the
activity of the SUs causes more interference at the PU receiver and
leads to more ARQ retransmissions. In turn, this will make more IC
opportunities available at the SU receivers, thereby increasing the
SU sum throughput. On the other hand, since for PM already Known and
``No FIC'' the SUs assume that the PU messages are already known or
they do not apply IC, respectively, there is no benefit in
augmenting the ARQ retransmissions and therefore the performance is
constant for small $\bar{\gamma}_{s_1p}$ and $\bar{\gamma}_{s_2p}$,
until the constraint becomes active for $\bar{\gamma}_{s_1p}>0.5$,
$\bar{\gamma}_{s_1p}=\bar{\gamma}_{s_2p}>0.25$ and
$\bar{\gamma}_{s_1p}=\bar{\gamma}_{s_2p}>0.45$, respectively in the
CRN with one SU, centralized and decentralized cases; therefore,
above those values, the SU sum throughput decreases. As expected, in
the cognitive radio with two symmetric SUs and centralized scenario,
the PU throughput degradation constraint becomes active sooner than
in the cognitive radio with one SU, when increasing the SNR of the
channels from the SU transmitters to the PU receiver. A similar
observation can be made when
$\bar{\gamma}_{ps_1}=\bar{\gamma}_{ps_2}=2$ as depicted in Fig.
\ref{F4}. Our results, not shown here, confirm the same observation
for $N=3$ when compared with $N=2$. It is noteworthy that because
$\bar{\gamma}_{ps_1}=\bar{\gamma}_{ps_2}=2$ are neither strong
enough to be successfully decoded, nor so weak as to be considered
as small noise at the SU receivers, the SU sum throughput provided
by the centralized case suffers a higher performance loss with
respect to the upper bound compared with that in Fig. \ref{F8}. This
observation is clearly seen also in the next two figures, as
discussed later.
\begin{figure}[t!]
\centering
\includegraphics[width=3. in]{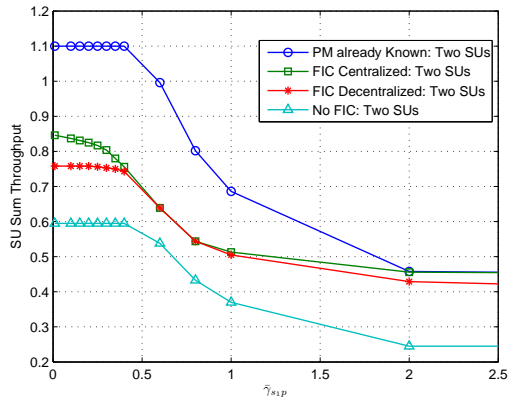}\caption{Average sum throughput of SUs with respect to $\bar{\gamma}_{s_1p}$. $\bar{\gamma}_{ps_1}=\bar{\gamma}_{ps_2}=5$, $\bar{\gamma}_{s_2p}=2$, $\bar{\gamma}_{pp}=10$, $\bar{\gamma}_{s_1s_1}=\bar{\gamma}_{s_2s_2}=5$, $\bar{\gamma}_{s_1s_2}=\bar{\gamma}_{s_2s_1}=3$ and
$\epsilon_{PU}=0.2$.}\label{F1}
\end{figure}

\begin{figure}[t!]
\centering
\includegraphics[width=3. in]{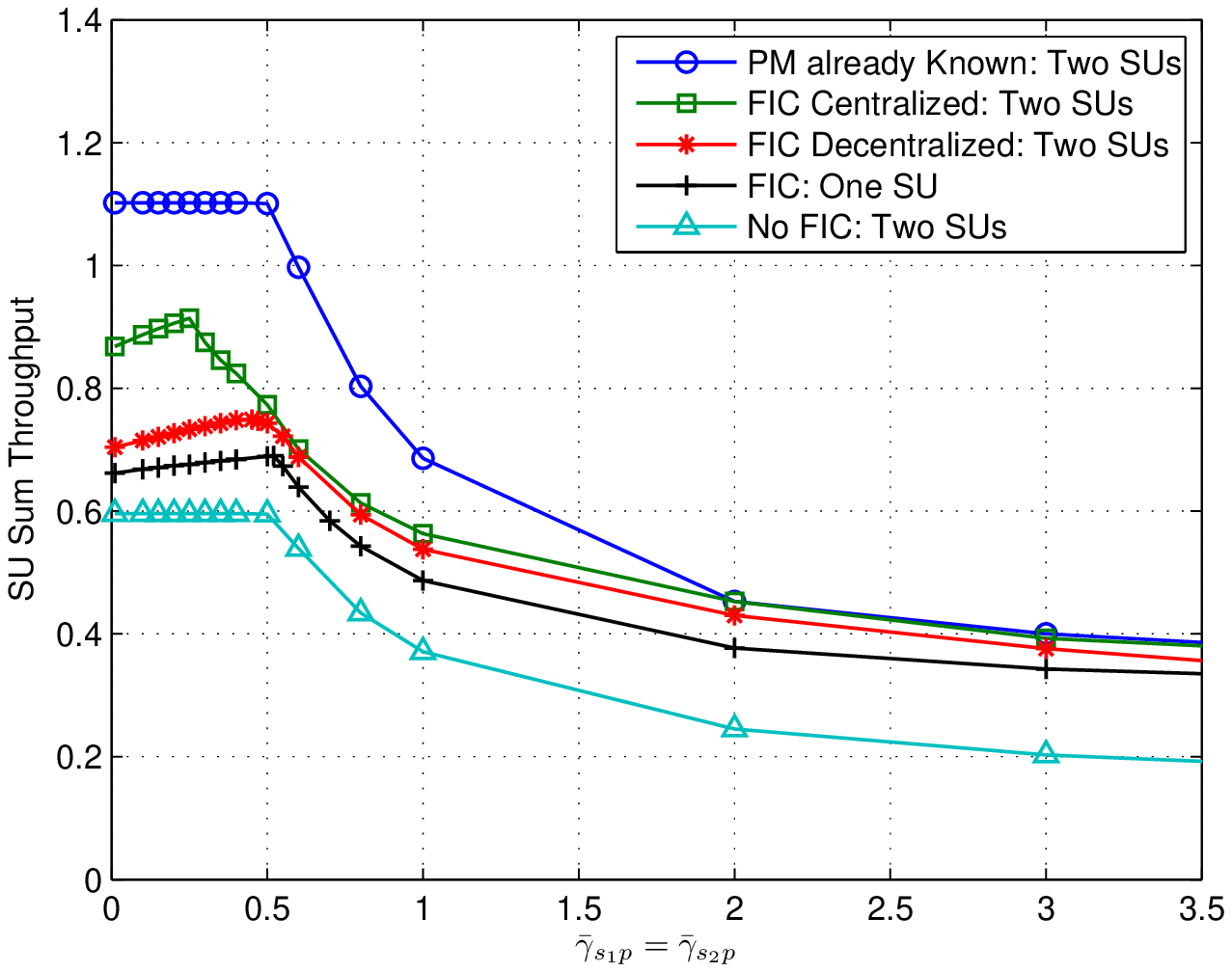}\caption{Average sum throughput of SUs with respect to $\bar{\gamma}_{s_1p}=\bar{\gamma}_{s_2p}$. $\bar{\gamma}_{ps_1}=\bar{\gamma}_{ps_2}=5$, $\bar{\gamma}_{pp}=10$, $\bar{\gamma}_{s_1s_1}=\bar{\gamma}_{s_2s_2}=5$, $\bar{\gamma}_{s_1s_2}=\bar{\gamma}_{s_2s_1}=3$ and
$\epsilon_{PU}=0.2$.}\label{F8}
\end{figure}

\begin{figure}[t!]
\centering
\includegraphics[width=3. in]{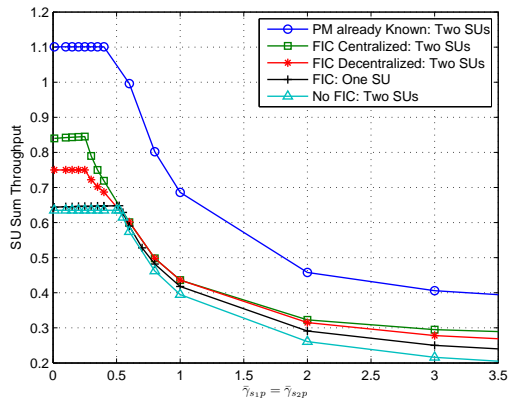}\caption{Average sum throughput of SUs with respect to $\bar{\gamma}_{s_1p}=\bar{\gamma}_{s_2p}$. $\bar{\gamma}_{ps_1}=\bar{\gamma}_{ps_2}=2$, $\bar{\gamma}_{pp}=10$, $\bar{\gamma}_{s_1s_1}=\bar{\gamma}_{s_2s_2}=5$, $\bar{\gamma}_{s_1s_2}=\bar{\gamma}_{s_2s_1}=3$ and
$\epsilon_{PU}=0.2$.}\label{F4}
\end{figure}

Figs. \ref{F2} and \ref{F9} show the average SU sum throughput with
respect to $\bar{\gamma}_{ps_1}$ for $\bar{\gamma}_{ps_2}=5$ and
$\bar{\gamma}_{ps_2}=\bar{\gamma}_{ps_1}$, respectively. Note that
$R^*_{s_1,a,\Phi=(U,\theta)}$ and $R^*_{s_2,a,\Phi=(\theta,U)}$
respectively depend on $\bar{\gamma}_{ps_1}$ and
$\bar{\gamma}_{ps_2}$. As expected, $\bar{\gamma}_{ps_1}$ does not
have any influence on the ``PM already Known'' scheme. This is
because in this scheme the PU message is previously known and can
always be canceled by the SU receiver in future retransmissions. It
is observed that for large enough values of $\bar{\gamma}_{ps_1}$,
the upper bound is achievable by the FIC scheme in the centralized
scenario. In fact, the SU receiver can successfully decode the PU
message, remove the interference and decode its corresponding
message. Note that the upper bound is computed in the centralized
scenario. The sum throughput is minimized at $\bar{\gamma}_{ps_1}=2$
in the CRN with one SU, centralized and decentralized cases, where
the PU message is neither strong enough to be successfully decoded,
nor weak to be considered as negligible. It is also evident that the
FIC scheme in Fig. \ref{F2} converges to the upper bound faster than
in Fig. \ref{F9}. The reason is that $\bar{\gamma}_{sp_1}$ and
$\bar{\gamma}_{sp_2}$ increase simultaneously in Fig. \ref{F9},
whereas the value of $\bar{\gamma}_{sp_2}$ is considered to be equal
to zero in Fig. \ref{F2}, resulting in no interference to the PU
receiver. It is also observed from Fig. \ref{F9} that a cognitive
radio with two symmetric SUs converges to the upper bound faster
than the network with one SU for large enough SNR of the channels
from the PU transmitter to SU receivers. This is because of the use
of the FIC scheme at the SU receivers. A similar behavior has been
observed in a CRN with $N=3$.

\begin{figure}[t!]
\centering
\includegraphics[width=3. in]{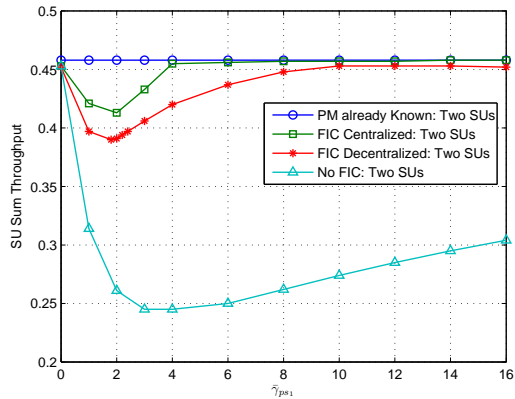}\caption{Average sum throughput of SUs with respect to $\bar{\gamma}_{ps_1}$. $\bar{\gamma}_{ps_2}=5$, $\bar{\gamma}_{s_1p}=\bar{\gamma}_{s_2p}=2$,
 $\bar{\gamma}_{pp}=10$, $\bar{\gamma}_{s_1s_1}=\bar{\gamma}_{s_2s_2}=5$, $\bar{\gamma}_{s_1s_2}=\bar{\gamma}_{s_2s_1}=3$ and $\epsilon_{PU}=0.2$. }\label{F2}
\end{figure}

\begin{figure}[t!]
\centering
\includegraphics[width=3. in]{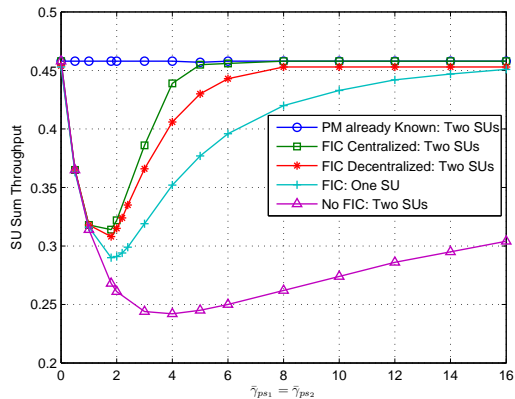}\caption{Average sum throughput of SUs with respect to $\bar{\gamma}_{ps_1}=\bar{\gamma}_{ps_2}$. $\bar{\gamma}_{s_1p}=\bar{\gamma}_{s_2p}=2$,
 $\bar{\gamma}_{pp}=10$, $\bar{\gamma}_{s_1s_1}=\bar{\gamma}_{s_2s_2}=5$, $\bar{\gamma}_{s_1s_2}=\bar{\gamma}_{s_2s_1}=3$ and $\epsilon_{PU}=0.2$. }\label{F9}
\end{figure}

\section{Extension to Decentralized Access Policy Design with Partially State Information}\label{SV}
In this section, we discuss a possible model for the decentralized
scenario when the PU message knowledge state is known partially for
the SUs in addition to the action being selected by the SU
independently of the other SUs. In Section \ref{S_DeC}, the PU
message knowledge state of each SU was assumed to be also known to
the other SUs, which makes the whole state of the system known to
all SUs. Now we assume that each user can only observe its own PU
message knowledge state. When there is an uncertainty about the
state of the system, the problem is called ``Distributed Partial
State Information MDP'' (DEC-PSI-MDP) which is a type of ``Partially
Observable MDP'' (DEC-POMDP). For a literature review on the
decentralized control of DEC-POMDP, the reader is referred to
\cite{R24}. In this model, the shared objective function is used
(here the SU Sum throughput) and the action is selected based on the
partial state observation at each SU. Because each secondary user is
unaware of the belief states of the other users, it is impossible
for each user to properly estimate the state of the system. Thus, a
DEC-POMDP can not be formulated as an MDP by introducing beliefs. It
can be shown that DEC-POMDP is nondeterministic exponential (NEXP)
complete even for two users \cite{R25} and, hence, only approximate
solutions can be applied \cite{R20}. Consideration of this type of
system is left as future work.

\section{Conclusion}\label{SVI}
In this paper, an optimal access policy for an arbitrary number of
cognitive secondary users was proposed, under a constraint on the
interference from the secondary users to the primary receiver.
Leveraging the inherent redundancy of the ARQ retransmissions
implemented by the PU, each SU receiver can cancel a successfully
decoded PU message in the following ARQ retransmissions, thereby
improving its own throughput. Both centralized and decentralized
scenarios were considered. In the first scenario, there is a
centralized unit which controls the access to the channel of all
SUs, to maximize the average sum throughput of the SUs under the
average PU throughput degradation constraint. In the decentralized
scenario, there exists no central unit and therefore each SU makes
an access decision independently of the other SUs, while the state
of the system is still assumed to be known to all secondary users.
In the centralized case, an upper bound was formulated and a close
form solution was provided. Our studies confirm that the centralized
and decentralized scenarios may be modeled as CMDP and MMDP and
therefore solved by linear programming. At the end, extension of the
problem to CRN with partial state information was discussed.

\appendices

\section{Proof of Proposition \ref{PR2}}\label{A1}
Define:
\begin{align}
d_i=Dec2Bin_N(i).({T}_{s_1,i,(K,...,K)},...,{T}_{s_N,i,(K,...,K)})\;\;\;i\in\{1,...,2^{N}-1\},
\end{align}
where ${T}_{s_n,i,(K,...,K)}$, $n\in\{1,...,N\}$ is given in
\eqref{E30}. A list of all situations is given here in detail for
$N=2$, and can be extended to an arbitrary $N$.
\begin{enumerate}
\item  $\lambda_1=0$ and $\lambda_2=0$. From \eqref{E92}, it is necessary to have
\begin{align}
d_i=0\;\;\;i\in\{1,2,3\}
\end{align}
Hence, this case is not acceptable.
\item $\mu_i=0,\;i\in\{1,2,3\}$. This case gives an SU sum
throughput equal to zero and hence does not provide the optimum
solution.

\item $\lambda_1>0$, $\lambda_2=0$,
$\mu_i>0,\mu_j=0,\mu_k=0$, $(i,j,k)\in\{(1,2,3),(2,1,3),(3,1,2)\}$.
It is observed from condition \eqref{E92} that $\frac{\partial
L}{\partial\mu_i}=0$, $\frac{\partial L}{\partial\mu_j}\leq0$ and
$\frac{\partial L}{\partial\mu_k}\leq0$. This occurs if
\begin{align}
&\frac{d_j}{{\rho_{p,j}-\rho_{p,0}}}\leq
\frac{d_i}{{\rho_{p,i}-\rho_{p,0}}}\\
&\frac{d_k}{{\rho_{p,k}-\rho_{p,0}}}\leq
\frac{d_i}{{\rho_{p,i}-\rho_{p,0}}}
\end{align}
Noting \eqref{E93} and \eqref{E94}, we have
$\mu_i({\rho_{p,i}-\rho_{p,0}})=\epsilon_{\omega}$ and $\mu_i\leq1$;
or equivalently
\begin{align}
\mu_i=\frac{\epsilon_{\omega}}{\rho_{p,i}-\rho_{p,0}}\leq1.
\end{align}
Thus, the resulting maximum SU sum throughput is equal to
$\frac{\epsilon_{\omega}}{\rho_{p,i}-\rho_{p,0}}{d_i}$.

\item $\lambda_1=0$, $\lambda_2>0$,
$\mu_i>0,\mu_j=0,\mu_k=0$, $(i,j,k)\in\{(1,2,3),(2,1,3),(3,1,2)\}$.
It is observed from condition \eqref{E92} that $\frac{\partial
L}{\partial\mu_i}=0$, $\frac{\partial L}{\partial\mu_j}\leq0$ and
$\frac{\partial L}{\partial\mu_k}\leq0$. This occurs if
\begin{align}
&d_k\leq d_i\\
&d_j\leq d_i.
\end{align}
Noting \eqref{E93} and \eqref{E94}, we have
$\mu_i({\rho_{p,i}-\rho_{p,0}})\leq\epsilon_{\omega}$ and $\mu_i=1$;
or equivalently
\begin{align}
\mu_i=1\leq\frac{\epsilon_{\omega}}{\rho_{p,i}-\rho_{p,0}}.
\end{align}
Thus, the resulting maximum SU sum throughput is equal to $d_i$.

\item $\lambda_1>0$, $\lambda_2>0$,
$\mu_i>0,\mu_j=0,\mu_k=0$, $(i,j,k)\in\{(1,2,3),(2,1,3),(3,1,2)\}$.
It is observed from condition \eqref{E92} that $\frac{\partial
L}{\partial\mu_i}=0$, $\frac{\partial L}{\partial\mu_j}\leq0$ and
$\frac{\partial L}{\partial\mu_k}\leq0$. This occurs if
\begin{align}
&d_k\geq d_i\;\;\;\mathrm{if}\;{{\rho_{p,k }-\rho_{p,0}}\geq{\rho_{p,i}-\rho_{p,0}}}\\
&d_k< d_i\;\;\;\mathrm{if}\;{{\rho_{p,k }-\rho_{p,0}}<{\rho_{p,i}-\rho_{p,0}}}\\
&d_j\geq
d_i\;\;\;\mathrm{if}\;{{\rho_{p,j}-\rho_{p,0}}\geq{\rho_{p,i}-\rho_{p,0}}}\\
&d_j<
d_i\;\;\;\mathrm{if}\;{{\rho_{p,j}-\rho_{p,0}}<{\rho_{p,i}-\rho_{p,0}}}.
\end{align}
Noting \eqref{E93} and \eqref{E94}, we have
\begin{align}
\mu_i=1=\frac{\epsilon_{\omega}}{\rho_{p,i}-\rho_{p,0}}.
\end{align}
Thus, the resulting maximum SU sum throughput is equal to $d_i$.

\item $\lambda_1>0$, $\lambda_2=0$,
$\mu_1>0,\mu_2>0,\mu_3>0$. It is observed from condition \eqref{E92}
that $\frac{\partial L}{\partial\mu_1}=\frac{\partial
L}{\partial\mu_2}=\frac{\partial L}{\partial\mu_3}=0$. This occurs
if
\begin{align}
\frac{d_1}{{\rho_{p,1}-\rho_{p,0}}}=\frac{d_2}{{\rho_{p,2}-\rho_{p,0}}}=\frac{d_3}{{\rho_{p,3}-\rho_{p,0}}}.
\end{align}
 Noting \eqref{E93} and \eqref{E94},
$\mu_1({\rho_{p,1}-\rho_{p,0}})+\mu_2({\rho_{p,2}-\rho_{p,0}})+\mu_3({\rho_{p,3}-\rho_{p,0}})=\epsilon_{\omega}$
and $\mu_1+\mu_2+\mu_3\leq1$. These two conditions impose that
\begin{align}
\epsilon_{\omega}\leq {\rho_{p,3}-\rho_{p,0}}.
\end{align}
Thus, the resulting maximum SU sum throughput is equal to
$\frac{\epsilon_{\omega}{d_1}}{{\rho_{p,1}-\rho_{p,0}}}$.

\item $\lambda_1=0$, $\lambda_2>0$,
$\mu_1>0,\mu_2>0,\mu_3>0$. It is observed from condition \eqref{E92}
that $\frac{\partial L}{\partial\mu_1}=\frac{\partial
L}{\partial\mu_2}=\frac{\partial L}{\partial\mu_3}=0$. This occurs
if
\begin{align}
d_1=d_2=d_3.
\end{align}
 Noting \eqref{E93} and \eqref{E94},
$\mu_1({\rho_{p,1}-\rho_{p,0}})+\mu_2({\rho_{p,2}-\rho_{p,0}})+\mu_3({\rho_{p,3}-\rho_{p,0}})\leq\epsilon_{\omega}$
and $\mu_1+\mu_2+\mu_3=1$. The conditions impose that
\begin{align}
\operatorname{min}{(\rho_{p,1}-\rho_{p,0},\rho_{p,2}-\rho_{p,0})}\leq\epsilon_{\omega}.
\end{align}
Thus, the resulting maximum SU sum throughput is equal to $d_1$.

\item $\lambda_1>0$, $\lambda_2>0$,
$\mu_1>0,\mu_2>0,\mu_3>0$. It is observed from condition \eqref{E92}
that $\frac{\partial L}{\partial\mu_1}=\frac{\partial
L}{\partial\mu_2}=\frac{\partial L}{\partial\mu_3}=0$. This occurs
if
\begin{align}
&d_i\leq d_3\;\;\;i\in\{1,2\}\\
&d_1\geq d_2\;\;\;\mathrm{if}\;{{\rho_{p,1 }-\rho_{p,0}}\geq{\rho_{p,2}-\rho_{p,0}}}\\
&d_1< d_2\;\;\;\mathrm{if}\;{{\rho_{p,1 }-\rho_{p,0}}<{\rho_{p,2}-\rho_{p,0}}}\\
&\frac{d_3}{\rho_{p,3}-\rho_{p,0}}<\operatorname{min} \{
\frac{d_1}{\rho_{p,1}-\rho_{p,0}},\frac{d_2}{\rho_{p,2}-\rho_{p,0}}\}\\
&\frac{d_1}{\rho_{p,1}-\rho_{p,0}}\geq\frac{d_2}{\rho_{p,2}-\rho_{p,0}}\;\;\;\mathrm{if}\;{{\rho_{p,1
}-\rho_{p,0}}\leq{\rho_{p,2}-\rho_{p,0}}}\\
&\frac{d_1}{\rho_{p,1}-\rho_{p,0}}<\frac{d_2}{\rho_{p,2}-\rho_{p,0}}\;\;\;\mathrm{if}\;{{\rho_{p,1
}-\rho_{p,0}}>{\rho_{p,2}-\rho_{p,0}}}
\end{align}
Noting \eqref{E93} and \eqref{E94},
$\mu_1({\rho_{p,1}-\rho_{p,0}})+\mu_2({\rho_{p,2}-\rho_{p,0}})+\mu_3({\rho_{p,3}-\rho_{p,0}})=\epsilon_{\omega}$
and $\mu_1+\mu_2+\mu_3=1$. The conditions impose that
\begin{align}
&\operatorname{min}{(\rho_{p,1}-\rho_{p,0},\rho_{p,2}-\rho_{p,0})}\leq\epsilon_{\omega}\leq\operatorname{max}{(\rho_{p,1}-\rho_{p,0},\rho_{p,2}-\rho_{p,0})}\\
&\epsilon_{\omega} \le{\rho_{p,3}-\rho_{p,0}}.
\end{align}
 Thus, the resulting maximum SU sum throughput is equal or lower
 than
$\epsilon_{\omega}\operatorname{max}{(\frac{d_1}{\rho_{p,1}-\rho_{p,0}},\frac{d_2}{\rho_{p,2}-\rho_{p,0}})}$
and the equality is achieved when
$\frac{d_1}{\rho_{p,1}-\rho_{p,0}}=\frac{d_2}{\rho_{p,2}-\rho_{p,0}}$.

\item $\lambda_1>0$, $\lambda_2=0$,
$\mu_i=0,\mu_j>0,\mu_k>0$, $(i,j,k)\in\{(1,2,3),(2,1,3),(3,1,2)\}$.
It is observed from condition \eqref{E92} that $\frac{\partial
L}{\partial\mu_i}\leq0$, $\frac{\partial L}{\partial\mu_j}=0$ and
$\frac{\partial L}{\partial\mu_k}=0$. This occurs if
\begin{align}
\frac{d_i}{{\rho_{p,i}-\rho_{p,0}}}\leq\frac{d_j}{\rho_{p,j}-\rho_{p,0}}=\frac{d_k}{\rho_{p,k}-\rho_{p,0}}.
\end{align}
 Noting \eqref{E93} and \eqref{E94},
$\mu_j({\rho_{p,j}-\rho_{p,0}})+\mu_k({\rho_{p,k}-\rho_{p,0}})=\epsilon_{\omega}$
and $\mu_j+\mu_k\leq1$. The conditions impose that
\begin{align}
\epsilon_{\omega}\leq\operatorname{max}{(\rho_{p,j}-\rho_{p,0},\rho_{p,k}-\rho_{p,0})}.
\end{align}
and the resulting maximum SU sum throughput is equal to
$\frac{\epsilon_{\omega}{d_j}}{{\rho_{p,j}-\rho_{p,0}}}$.

\item $\lambda_1=0$, $\lambda_2>0$,
$\mu_i=0,\mu_j>0,\mu_k>0$, $(i,j,k)\in\{(1,2,3),(2,1,3),(3,1,2)\}$.
It is observed from condition \eqref{E92} that $\frac{\partial
L}{\partial\mu_i}\leq0$, $\frac{\partial L}{\partial\mu_j}=0$ and
$\frac{\partial L}{\partial\mu_k}=0$. This occurs if
\begin{align}
d_i\leq{d_j}=d_k.
\end{align} Noting \eqref{E93} and \eqref{E94},
$\mu_j({\rho_{p,j}-\rho_{p,0}})+\mu_k({\rho_{p,k}-\rho_{p,0}})\leq\epsilon_{\omega}$
and $\mu_j+\mu_k=1$.  The conditions impose that
\begin{align}
\operatorname{min}{(\rho_{p,j}-\rho_{p,0},\rho_{p,k}-\rho_{p,0})}\leq\epsilon_{\omega}.
\end{align}
 The resulting maximum SU sum throughput is equal to
$d_j$.

\item $\lambda_1>0$, $\lambda_2>0$,
$\mu_i=0,\mu_j>0,\mu_k>0$, $(i,j,k)\in\{(1,2,3),(2,1,3),(3,1,2)\}$.
It is observed from condition \eqref{E92} that $\frac{\partial
L}{\partial\mu_i}\leq0$, $\frac{\partial L}{\partial\mu_j}=0$ and
$\frac{\partial L}{\partial\mu_k}=0$. This occurs if
\begin{align}
&d_j\geq d_k\;\;\;\mathrm{if}\;{{\rho_{p,j }-\rho_{p,0}}\geq{\rho_{p,k}-\rho_{p,0}}}\\
&d_j< d_k\;\;\;\mathrm{if}\;{{\rho_{p,j }-\rho_{p,0}}<{\rho_{p,k}-\rho_{p,0}}}\\
&\frac{d_j}{\rho_{p,j}-\rho_{p,0}}\geq\frac{d_k}{\rho_{p,k}-\rho_{p,0}}\;\;\;\mathrm{if}\;{{\rho_{p,j }-\rho_{p,0}}\leq{\rho_{p,k}-\rho_{p,0}}}\\
&\frac{d_j}{\rho_{p,j}-\rho_{p,0}}<\frac{d_k}{\rho_{p,k}-\rho_{p,0}}\;\;\;\mathrm{if}\;{{\rho_{p,j }-\rho_{p,0}}>{\rho_{p,k}-\rho_{p,0}}}\\
&d_i\geq d_j\;\;\;\mathrm{if}\;{{\rho_{p,i }-\rho_{p,0}}\geq{\rho_{p,j}-\rho_{p,0}}}\\
&d_i< d_j\;\;\;\mathrm{if}\;{{\rho_{p,i }-\rho_{p,0}}<{\rho_{p,j}-\rho_{p,0}}}\\
&d_i\geq d_k\;\;\;\mathrm{if}\;{{\rho_{p,i
}-\rho_{p,0}}\geq{\rho_{p,k}-\rho_{p,0}}}\\
&d_i< d_k\;\;\;\mathrm{if}\;{{\rho_{p,i
}-\rho_{p,0}}<{\rho_{p,k}-\rho_{p,0}}}
\end{align}
 Noting \eqref{E93} and \eqref{E94},
$\mu_j({\rho_{p,j}-\rho_{p,0}})+\mu_k({\rho_{p,k}-\rho_{p,0}})=\epsilon_{\omega}$
and $\mu_j+\mu_k=1$.  The conditions impose that
\begin{align}
\operatorname{min}{(\rho_{p,j}-\rho_{p,0},\rho_{p,k}-\rho_{p,0})}\leq\epsilon_{\omega}\leq\operatorname{max}{(\rho_{p,j}-\rho_{p,0},\rho_{p,k}-\rho_{p,0})}.
\end{align}
 The resulting maximum SU sum throughput is equal or lower than
 $\epsilon_{\omega}\operatorname{max}{(\frac{d_j}{\rho_{p,j}-\rho_{p,0}},\frac{d_k}{\rho_{p,k}-\rho_{p,0}})}$
 and the equality is achieved when
 $\frac{d_j}{\rho_{p,j}-\rho_{p,0}}=\frac{d_k}{\rho_{p,k}-\rho_{p,0}}$.
\end{enumerate}
Noting items 1 to 11, it is observed that items $3$, $4$, $6$, $7$,
$9$, $10$ provide optimum solutions and hence, the optimum access
policy and SU Sum throughput can be summarized in \eqref{E91} and
\eqref{E28} respectively. Thus, the proof is complete.

\bibliographystyle{Ieeetr}
\bibliography{IEEEabrv,Ref}

\end{document}